\newtheorem{problem}{Problem}
\newtheorem{proposition}{Proposition}
\newtheorem{definition}{Definition}
\def\BibTeX{{\rm B\kern-.05em{\sc i\kern-.025em b}\kern-.08em
    T\kern-.1667em\lower.7ex\hbox{E}\kern-.125emX}}
\begin{document}

\title{Formal Verification of Stochastic Systems with ReLU \\ Neural Network Controllers
}
\author{
  Shiqi Sun,
  Yan Zhang,
  Xusheng Luo,
  Panagiotis Vlantis,
  Miroslav Pajic
  and
  Michael M. Zavlanos
  \thanks{
    Shiqi Sun,
    Yan Zhang,
    Xusheng Luo,
    Panagiotis Vlantis
    and
    Michael
    M. Zavlanos
    are with the
    Department
    of
    Mechanical
    Engineering
    and
    Materials
    Science,
    Duke
    University,
    Durham,
    NC
    27708,
    USA.
    \texttt{
      \small
      \{shiqi.sun,yan.zhang2,xusheng.luo, panagiotis.vlantis,michael.zavlanos\}@duke.edu
    }
    Miroslav Pajic
    is with the
    Department of Electrical and Computer Engineering
    Duke
    University,
    Durham,
    NC
    27708,
    USA.
    \texttt{
      \small
      {miroslav.pajic}@duke.edu
    }.
    This work is supported in part by AFOSR under award \#FA9550-19-1-0169
    and by NSF under award CNS-1932011.
  }
}
\maketitle

\begin{abstract}
  In this work,
  we address the problem of formal safety verification for
  stochastic cyber-physical systems (CPS) equipped with
  ReLU neural network (NN) controllers.
  Our goal is to find the set of initial states from where,
  with a predetermined confidence,
  the system will not reach an unsafe configuration
  within a specified time horizon.
  Specifically,
  we consider discrete-time LTI systems with Gaussian noise,
  which we abstract by a suitable graph.
  Then,
  we formulate  
  a Satisfiability Modulo Convex (SMC) problem
  to estimate upper bounds on
  the transition probabilities between nodes in the graph.
  %
  %
  Using this abstraction,
  we propose a method to compute tight bounds on
  the safety probabilities of nodes in this graph,
  despite
  possible over-approximations of the transition probabilities
  between these nodes.
  Additionally,
  using the proposed SMC formula,
  we devise a heuristic method to refine the abstraction of the system
  in order to further improve the estimated safety bounds.
  Finally,
  we corroborate the efficacy of the proposed method with
  simulation results
  considering a robot navigation example
  and
  comparison against a state-of-the-art verification scheme.
\end{abstract}


\section{Introduction}%
\label{sec/intro}

In recent years,
advances in the field of deep learning have furnished
a new class of adept and adaptable control schemes for cyber-physical systems
which considerably simplify the overall design process.
Robot navigation is one such exemplar application
where neural network controllers have been successfully employed for
steering a variety of robotic platforms in a diversity of situations~%
\cite{b3,b4,b5,b6,b7}.
As these AI-enabled systems open up new possibilities for control,
which are still considered understudied in the literature,
the issues of safety and reliability of neural network controllers,
become more pressing.
In addition,
as such control schemes are employed to address
safety-critical real-world problems,
the ability to formally verify the security of the neural networks becomes
imperative~%
\cite{b8}.

To address these challenges,
a significant effort has been recently directed to
the robustification and verification of deep neural networks.
Considering the former direction,
Generative Adversarial Network (GAN) methodologies
have been successfully employed to train networks with
improved tolerance to disturbances
\cite{b9,b10}.
Although these methods may yield noticeably more robust networks,
they provide no means of estimating the reliability of the system.
On the other hand,
verification schemes provide ways to estimate bounds on the output of
already trained networks and answer reachability queries related to
the corresponding closed-loop dynamics.
%
In \cite{b11,b12},
a reachability analysis method for neural networks
is proposed that relies on
semi-definite programming whereas,
in \cite{b13},
satisfiability modulo theory is employed to provide
formal verification guarantees.
%
In \cite{b14,b15},
a hybrid system verification scheme is proposed to
answer reachability queries concerning dynamical systems equipped with
neural network controllers.
Likewise,
the Satisfiability Modulo Convex optimization (SMC) approach~\cite{b17}
is adopted in~\cite{b1} in order to
verify the safety of neural networks with ReLU activation functions
for robotic platforms equipped with proximity sensors.
Stochastic problems are also considered in \cite{b20,b21,b22,b23,b24}.
%
%
Particularly,
in~\cite{b23,b24}
a sampling-based method is proposed to ensure
safety of a closed-loop system
subject to randomness only in the initial conditions,
whereas
the methodologies in~\cite{b21,b22}
are limited to mixed monotone stochastic systems.
Finally,
guarantees on the safety of stochastic switched systems
equipped with general nonlinear controllers are also derived in~\cite{b20}
using
Internal Markov Chain (IMC)
and
Bound-Parameter Markov Decision Process (BMDP) methods
which, however, require very fine partitions of the domain in order to
furnish accurate safety probability bounds.

In this paper,
we propose a new verification scheme for
stochastic dynamical systems equipped with ReLU neural network controllers.
Particularly,
we consider discrete-time LTI systems with Gaussian noise and
partition of the continuous state space into convex sets (e.g.,
constructed as in \cite{b1}).
Then,
we abstract the system by a graph
and
formulate a Satisfiability Modulo Convex problem
which we solve using existing tools
in order to
estimate valid upper bounds on
the transition probabilities between pairs of nodes in the graph.
Using this transition graph,
we also develop an algorithm
to estimate tight upper bounds on
the probability the system reaches the set of unwanted states
after a specified amount of steps,
even when the underlying transition probability bounds
have been over-estimated.
%
%
Unlike methodologies such as~\cite{b23,b24} that yield
probably correct estimations of the safety probability bounds,
here we provide bounds that are correct by design.
Additionally,
we use the proposed SMC formula to devise a heuristic method to
subdivide the cells in a given abstraction in order to
further improve our safety probability estimations.
Finally,
we provide numerical simulations on a robot navigation problem
corroborating the efficacy of
our proposed verification method compared to~\cite{b20},
which provides looser bounds on the safety probability for
the coarse partitions considered here.

We organize the paper as follows.
%
In \autoref{sec/prob},
we formulate the problem under consideration
while
in \autoref{sec/graph}
we elaborate on
the construction of the graph
and
the methodology used for
computing upper bounds on the transition probabilities.
Then,
in \autoref{sec/safety},
we present the proposed verification scheme
and
%
in \autoref{sec/refine}
we develop the proposed heuristic method to refine
the selected state abstraction.
Finally,
in \autoref{sec/sim}
we conclude this work by presenting comparative results corroborating the
efficacy of our scheme.

\section{Problem Formulation}%
\label{sec/prob}


We consider an autonomous robot moving in
a compact, polytopic workspace $ \mathcal{W}\subset\mathbb{R}^p $
occupied by
a set of zero or more inner obstacles $ \{\mathcal{O}_i\}_{i=1}^{p_o}$.
Let \(\mathcal{W}_{s} = \mathcal{W} \setminus \cup_{i=1}^{p_o} \mathcal{O}_{i} \)
be the set of safe robot positions and let \( \mathcal{W}_{o} \) denote
its complement.
The robot's dynamics are described by the following
linear discrete-time stochastic model
\begin{equation}\label{eq/sys}
    x^{t+1}=Ax^{t}+Bu^{t}+w^t,
\end{equation}
where
\( x^t \in \mathcal{X} \subseteq \mathbb{R}^{n} \)
and
\( u^t \in \mathcal{U} \subset \mathbb{R}^{m} \)
denote the robot's state and control input at time \( t \),
respectively,
and
$w^t\sim\mathcal{N}(0, \sigma_t)$, $\sigma_t \in \mathbb{R}^n$
is externally induced Gaussian noise, applied at time \( t \).
We assume that the robot is equipped with one or more sensors that
allow it to perceive the unoccluded region of the workspace around it.
We shall use \( d(x^t) \) to denote
the sensor measurements obtained at configuration \( x^t \),
with
\( d: \mathbb{R}^n \mapsto \mathbb{R}^{q} \) being the measurement function.
Also,
we assume that a pre-trained, feed-forward neural network controller
\( f_{\mathrm{NN}}: \mathbb{R}^{q} \mapsto \mathbb{R}^{m} \)
is provided for steering the robot to a desired configuration, i.e.,
\( u^t = f_{\mathrm{NN}}(d(x^t)) \).
Particularly,
we assume that the controller consists of \( L \) fully connected layers, i.e.,
\begin{equation}\label{eq/ctrl}
  \begin{split}
    h^{1}&=\max\big(0,W_{\phi}^0d(x^{t})+w_{\phi}^0\big),\\
    h^{2}&=\max\big(0,W_{\phi}^1h^{1}+w_{\phi}^1\big),\\
    &\hspace{3em} \vdots\\
    h^{L}&=\max\big(0,W_{\phi}^{L-1}h^{L-1}+w_{\phi}^{L-1}\big),\\
    u^t&=W_{\phi}^Lh^{L}+w_{\phi}^L
  \end{split},
\end{equation}
where
$W_{\phi}^l\in\mathbb{R}^{M_{i} \times M_{l-1}}$,
$w_{\phi}^l\in\mathbb{R}^{M_l}$
are pre-trained weight bias matrices
and
$h^{i}$ denotes the output of the \(i\)-th layer.


Given the stochastic system \eqref{eq/sys}
and
associated control law \eqref{eq/ctrl},
let
%
\( P_{k}: \mathbb{R}^n \mapsto [0,1] \)
denote the probability that the robot will reach
an unsafe state after \(k\) time steps
starting from \( x^t \), i.e.,
\begin{equation}
  P_{k}(x^t) = P( \mathcal{P}_{W}(x^{t+k}) \in \mathcal{W}_{o} ~|~ x^t ),
\end{equation}
where
\( \mathcal{P}_{W}: \mathbb{R}^{n} \mapsto \mathbb{R}^{p} \)
is a projection operator that returns the robot's current position.
In the remainder,
we shall say that a state \( x^{t} \) is
\((p,k)\)-safe if \( P_{k}(x) \leq p \), given \( p \in [0, 1] \).
%
Note that, in practice,
computing a precise approximation of \( P_{k} \)
may generally be intractable.
Therefore,
in this work,
we address the problem of computing
a correct-by-design tight upper bound on \( P_{k} \),
which allows us to answer safety queries,
albeit more conservatively.
%
\begin{problem}
  \label{prob/main}
  Given a robotic system obeying the closed-loop dynamics
  \eqref{eq/sys} and \eqref{eq/ctrl},
  compute a tight upper bound of the probability function \( P_{k}(x) \),
  for given horizon \( k \) and \( \forall x \in \mathcal{X} \).
\end{problem}

\section{Transition Graph}%
\label{sec/graph}

%
%
%
%
%
%

In order to address \autoref{prob/main},
in this section
we develop a methodology
to construct a discrete abstraction of the system's dynamics
and
to compute upper bounds on the transition probabilities between
different pairs of cells in this abstraction.
Then, in \autoref{sec/safety},
we utilize this transition graph to
estimate upper bounds on the safety probability \( P_{k} \).


%
%

We begin by partitioning\footnote{
  We assume that
  the state space \( \mathcal{X} \) consisting of the viable configurations
  (i.e., states where the robot does not overlap with the obstacles) is
  either given as or sufficiently approximated by a polytope.
} %
the state space \( \mathcal{X} \)
into a set \( \mathcal{S} = \{\mathcal{S}_{i}\}_{i=1}^{p_s} \) of
\( p_s \) non-overlapping convex polytopes
such that
$\mathcal{X} = \cup_{i=1}^{p_s} \mathcal{S}_{i}$
and
$\mathcal{S}_{i} \cap \mathcal{S}_{j} = \emptyset$
for all \(i \neq j \).
Let \( \mathcal{E} \subseteq \mathcal{S} \times \mathcal{S} \)
consist of all the pairs \( \mathcal{S}_i, \mathcal{S}_j \) such that
there exists at least one \( x^{t} \in \mathcal{S}_i \) for which
\( P(x^{t+1} \in \mathcal{S}_j) > 0 \).
%
Using the partition \( \mathcal{S} \),
we can model the dynamics of the stochastic system in~\eqref{eq/sys} by
a graph with edge weights that correspond to upper bounds on
the transition probabilities between all pairs of
abstract states \(\mathcal{S}_i\) and \(\mathcal{S}_j\).
Specifically, we state the following definition.
\begin{definition}
  \label{def/graph}
  Given
  the system \eqref{eq/sys}, \eqref{eq/ctrl}
  and
  the partition of the space \( \mathcal{S} \),
  an Upper Bound Probabilistic Transition Graph is
  a tuple $\mathcal{D} = (\mathcal{S}, \mathcal{E}, \hat{P})$ such that
  \(
  P(x^{t+1} \in \mathcal{S}_j | x^{t} \in \mathcal{S}_i ) \leq
  \hat{P}( \mathcal{S}_i, \mathcal{S}_j )
  \)
  for all
  \( (\mathcal{S}_i, \mathcal{S}_j) \in \mathcal{E} \).
\end{definition}
\noindent%
In order to construct the transition graph in~\autoref{def/graph},
we require a function
$\hat{P}:\mathcal{S}\times\mathcal{S}\rightarrow[0, 1]$
that upper bounds the transition probability from
state \( \mathcal{S}_i \) to \( \mathcal{S}_j \) from above.
To accomplish this,
we extend the SMC encoding presented in~\cite{b1}
to the case of stochastic dynamical systems considered.
%
%

Specifically,
we consider the evolution \( X^{t+1} = A X^{t} + B u^{t} \),
where \( X^{t} \) is the expectation of \( x^{t} \).
Notice that \( x^{t+1} \) is normally distributed, i.e.,
\( x^{t+1} \sim \mathcal{N}(X^{t+1}, \delta^{t+1}) \),
for all \( X^{t} \in \mathcal{X} \).
Notice also that
any given convex polytope \( \mathcal{S}_i \)
can be defined as the intersection of a finite number of hyperplanes
\( a_{i, \ell}^T x \leq b_{i, \ell},~ \ell \in \mathcal{G}(\mathcal{S}_i) \),
with \( \mathcal{G}(\mathcal{S}_i) \) some arbitrary indexing.
%
%
%
Following a procedure similar to the one in~\cite{b18},
we can compute the augmented set
\begin{equation}%
  \label{eq/stateset/augmented}
  \Bar{\mathcal{S}}_i(q) = \{
  X |
  \min_{\ell \in \mathcal{G}(\mathcal{S}_i)}(P( V_{i,\ell}(x) \leq 0)) \geq q,
  ~
  x \sim (X, \delta)
  \},
\end{equation}
with
\(
V_{i,\ell}(x) = a_{i,\ell} x - b_{i,\ell},
\ell \in \mathcal{G}(\mathcal{S}_i)
\),
which is convex
and
also
its complement consists only of states \( X^{t+1} \) such that
\(
P( x^{t+1} \in \mathcal{S}_i | x^{t+1} \sim \mathcal{N}(X^{t+1}, \delta^{t+1}) )
< q
\).
%
This last fact can be derived by noticing that
\(
P(x^{t+1} \in \mathcal{S}_i)
<
\min_{\ell \in \mathcal{G}(\mathcal{S}_i)}(P(a_{i,\ell}^T x^{t+1} \leq b_{i,\ell}))
\).
%
Next,
let $b^l_i$ indicate the activation status of
the $i$-th node in the $l$-th layer of
the neural network controller \( f_{\mathrm{NN}} \), i.e.,
$b^l_i$ is false when $h^l_i = 0$.
Then,
given a probability threshold \( q \in [0, 1] \)
and
a pair of \( \mathcal{S}_i, \mathcal{S}_j \in \mathcal{S} \),
we can define the following SMC problem,
which we refer to as Stochastic Neural Network SMC (SNN-SMC):
\begin{align}
  & \exists~
    X^t, X^{t+1} \in \mathbb{R}^n,
    u^t \in \mathbb{R}^m,
    d\in \mathbb{R}^{2N}
  \\ & (b^l,h^l,t^l)
       \in
       \mathbb{B}^{M_l}\times\mathbb{R}^{M_l}\times\mathbb{R}^{M_l}
       \nonumber
\end{align}
subject to:
\begin{align}
  & X^t\in\mathcal{S}_i
    \label{eq/smc/state/prev}
  \\ & \land
       X^{t+1}\in\Bar{\mathcal{S}_j}(q)
       \label{eq/smc/state/next}
  \\ & \land
       {X^{t+1}}=A X^t+ B u
       \label{eq/smc/dynamics}
  \\ & \land
       \big(t^1=W_{\phi}^0d(X^t)+w_{\phi}^0\big)
       \cap
       \big(\overset{L}{\underset{l=2}{\bigwedge}}
       t^l=W_{\phi}^{l-1}h^{l-1}+w_{\phi}^l\big)
       \label{eq/smc/nn/hidden}
  \\ & \land
       \big(u^t=W_{\phi}^Lh^L+w_{\phi}^L\big)
       \label{eq/smc/nn/out}
  \\ & \land
       \overset{L}{\underset{l=1}{\bigwedge}}
       \overset{M_i}{\underset{i=1}{\bigwedge}}
       b^l_i \to \big[\big(h^l_i=t^l_i\big)\land\big(t^l_i\geq 0\big)\big]
       \label{eq/smc/bool1}
  \\ & \land
       \overset{L}{\underset{l=1}{\bigwedge}}
       \overset{M_i}{\underset{i=1}{\bigwedge}}
       \neg b^l_i\to\big[\big(h^l_i=0\big)\land\big(t^l_i< 0\big)\big].
       \label{eq/smc/bool2}
\end{align}
In the above definition,
\eqref{eq/smc/state/prev},\eqref{eq/smc/state/next}
encode the transition from state $\mathcal{S}_i$ to $\mathcal{S}_j$,
\eqref{eq/smc/dynamics}
encodes the system dynamics \eqref{eq/sys},
and
\eqref{eq/smc/nn/hidden}-\eqref{eq/smc/bool2}
encode the behavior imposed by the neural network controller.

Using this SNN-SMC encoding,
we now present our proposed algorithm to compute
the upper bounds \( \hat{P} \) in \autoref{def/graph}
on the underlying transition probabilities.
Specifically,
observe that for
a given pair of cells \( \mathcal{S}_i, \mathcal{S}_j \)
if the selected threshold \( q \) is large,
then there may not be \( x^{t} \) such that
\( P( x^{t+1} \in \mathcal{S}_j | x^t  \in \mathcal{S}_i) \geq q \),
which would render the SNN-SMC problem unsatisfiable.
Therefore,
any such threshold \( q \)
is a valid upper bound on the transition probability
from \( \mathcal{S}_i \) to \( \mathcal{S}_j \), i.e.,
\( P( \mathcal{S}_i, \mathcal{S}_j ) \leq q \).
%
%
Based on this fact,
we propose an iterative algorithm
outlined in \autoref{alg/prob/transition},
which, given a user-specified precision \( dq \in (0, 1) \),
employs binary search for finding \( q \) such that
the SNN-SMC problem for
\( \mathcal{S}_i, \mathcal{S}_j \) can no longer be satisfied.
%
%
By executing \autoref{alg/prob/transition} for
every pair of cells \( \mathcal{S}_i \) and \( \mathcal{S}_j \)
in the partition \(  \mathcal{S} \),
we can obtain the desired function \( \hat{P} \).

\begin{algorithm}[t]
  \hspace*{0.02in}
  {\bf Input:} $\mathcal{S}_i$, $\mathcal{S}_j$, $dq$ \\
  \hspace*{0.02in}
  {\bf Output:} $\hat{P}(\mathcal{S}_i,\mathcal{S}_j)$
  \begin{algorithmic}[1]
    \STATE $q_{l} \gets 0$
    \STATE $q_{r} \gets 1$
    \WHILE{$q_{r} - q_{l} > dq$}
    \STATE $q \gets 0.5 (q_{l} + q_{l})$
    \IF{ $\text{SNN-SMC}(\mathcal{S}_i,\mathcal{S}_j,q)$ not satisfiable }
    \STATE $q_{r} \gets q$
    \ELSE
    \STATE $q_{l} \gets q$
    \ENDIF
    \ENDWHILE
    \RETURN $q_{r}$
  \end{algorithmic}
  \caption{Estimation of $\hat{P}(\mathcal{S}_i, \mathcal{S}_j)$}
  \label{alg/prob/transition}
\end{algorithm}

\section{Safety Probability Bounds}%
\label{sec/safety}

In this section,
we elaborate on how to compute upper bounds on
the safety probability \( P_{k} \)
given a valid transition graph \( \mathcal{D} \).
%
First,
we extend the definition of a $(p,k)$-safe state $x$ to
a $(p,k)$-safe cell.
Specifically,
we say that a cell $\mathcal{S}_i$ is $(p,k)$-safe if
all the states in $\mathcal{S}_i$ are $(p,k)$-safe.
Let
\( \hat{P}_{k}: \mathcal{S} \mapsto [0,1] \)
denote any function that
bounds the safety probability \( P_{k} \) from above, i.e.,
\( \hat{P}_k(\mathcal{S}_i) \geq \max_{x \in \mathcal{S}_i}( P_{k}(x) ) \)
for all \( \mathcal{S}_i \in \mathcal{S} \).
Then,
it is simple to see that any cell $\mathcal{S}_i \in \mathcal{S}$
is $(p,k)$-safe if
$\hat{P}_k(\mathcal{S}_i) \leq p$.
Let
\(
\mathcal{N}_{\mathcal{S}_i} =
\{ \mathcal{S}_j | (\mathcal{S}_i, \mathcal{S}_j) \in \mathcal{E} \}
\)
denote the set of nodes \( \mathcal{S}_j \) reachable from \( \mathcal{S}_i \).
Next,
we discuss how to compute
tight safety probability bounds $\hat{P}_k$ for all cells
in the partition \( \mathcal{S} \).
%
%
We begin by presenting the following proposition which
provides a straightforward method to compute \( \hat{P}_{k+1} \)
given \( \hat{P}_{k} \).




\begin{proposition}%
  \label{stmt/safety/obvious}
  Assume that the transition and safety probability bounds
  \( \hat{P} \) and \( \hat{P}_k \) are known.
  Then, for any \( \mathcal{S}_i \in \mathcal{S} \) the following holds:
  \begin{equation}
    P_{k+1}( x ) \leq
    \sum_{\mathcal{S}_j \in \mathcal{N}_{\mathcal{S}_i}} {
      \hat{P}_{k}( \mathcal{S}_i ) \hat{P} ( \mathcal{S}_i, \mathcal{S}_j )
    }
    ,~~~ \forall x \in \mathcal{S}_i
  \end{equation}
\end{proposition}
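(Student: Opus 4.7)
My plan is to prove the bound via the law of total probability together with the Markov property of the dynamics in \eqref{eq/sys}--\eqref{eq/ctrl}. The key observation is that, conditioned on the successor state $x^{t+1}$, the residual probability of entering $\mathcal{W}_o$ in $k$ further steps depends only on $x^{t+1}$ and equals $P_k(x^{t+1})$. Hence, for any $x \in \mathcal{S}_i$, I expect to write
\begin{equation*}
P_{k+1}(x) \;=\; \mathbb{E}\bigl[\,P_k(x^{t+1}) \,\big|\, x^t = x\,\bigr].
\end{equation*}

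Next, I would partition the support of $x^{t+1}$ using the cells of $\mathcal{S}$ and note that all cells outside $\mathcal{N}_{\mathcal{S}_i}$ contribute zero by the definition of $\mathcal{E}$, so that
\begin{equation*}
P_{k+1}(x) \;=\; \sum_{\mathcal{S}_j \in \mathcal{N}_{\mathcal{S}_i}} P\bigl(x^{t+1}\in\mathcal{S}_j \,\big|\, x^t=x\bigr)\,\mathbb{E}\bigl[\,P_k(x^{t+1}) \,\big|\, x^{t+1}\in\mathcal{S}_j,\,x^t=x\bigr].
\end{equation*}
I would then bound the two factors separately. The conditional expectation is at most $\sup_{y\in\mathcal{S}_j} P_k(y) \le \hat{P}_k(\mathcal{S}_j)$ directly from the defining property of $\hat{P}_k$. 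For the transition factor, I would argue that $\hat{P}(\mathcal{S}_i,\mathcal{S}_j)$ upper-bounds $P(x^{t+1}\in\mathcal{S}_j \mid x^t=x)$ for every individual $x\in\mathcal{S}_i$, not merely the cell-conditioned probability. This is what Algorithm \ref{alg/prob/transition} actually certifies, since infeasibility of SNN-SMC at threshold $q$ rules out the existence of any point-witness $X^t\in\mathcal{S}_i$ whose successor lies in $\bar{\mathcal{S}}_j(q)$. Combining the two bounds termwise yields the claim.

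The main obstacle I anticipate is cosmetic rather than technical: the right-hand side of the statement is written as $\hat{P}_k(\mathcal{S}_i)$ under the summation, which does not depend on the summation index and appears to be a typographical slip for $\hat{P}_k(\mathcal{S}_j)$; the argument above delivers the latter, and I would adopt this reading so that the bound is meaningful (in particular, strictly tighter than the trivial bound $1$ in generic cases). A secondary, but genuine, technical point is ensuring that the supremum bound $\sup_{y\in\mathcal{S}_j} P_k(y)\le \hat{P}_k(\mathcal{S}_j)$ is justified by the very definition of $\hat{P}_k$ as a uniform cell-wise upper bound on $P_k$, which was stated immediately before the proposition; this is what legitimizes replacing the conditional expectation with $\hat{P}_k(\mathcal{S}_j)$ without any measurability or integrability caveats beyond what Gaussianity of $w^t$ already provides.
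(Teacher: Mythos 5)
Your proof follows essentially the same route as the paper's: a law-of-total-probability decomposition over the reachable cells $\mathcal{N}_{\mathcal{S}_i}$, the Markov property to reduce the inner term to $P_k$ evaluated at the successor, and termwise replacement of both factors by their upper bounds $\hat{P}_k(\mathcal{S}_j)$ and $\hat{P}(\mathcal{S}_i,\mathcal{S}_j)$; your pointwise conditioning on $x^t=x$ is in fact cleaner than the paper's set-conditioned notation. You are also right that the displayed bound's $\hat{P}_k(\mathcal{S}_i)$ is a typographical slip for $\hat{P}_k(\mathcal{S}_j)$ --- the final line of the paper's own proof confirms this reading.
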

\begin{proof}
  Let \( x^t \in \mathcal{S}_i \). Then, by definition, it holds that
  \begin{equation*}
    \begin{aligned}
      &
      P_{k+1}(x^{t}) =
      P(x^{t+k+1} \in \mathcal{W}_{o} | x^{t} \in \mathcal{S}_{i})
      \\ & ~ =
      \sum_{\mathcal{S}_j \in \mathcal{N}_{\mathcal{S}_i}} {
        P(x^{t+k+1} \in \mathcal{W}_{o} | x^{t+1} \in \mathcal{S}_{i})
        P(x^{t+1} \in \mathcal{S}_{i} | x^{t} \in \mathcal{S}_{j})
      }
      \\ & ~ \leq
      \sum_{\mathcal{S}_j \in \mathcal{N}_{\mathcal{S}_i}} {
        \hat{P}_{k}(\mathcal{S}_{j})
        \hat{P}(\mathcal{S}_{j}, \mathcal{S}_{i})
      }
    \end{aligned}
  \end{equation*}
\end{proof}
\noindent
We remark that
a valid choice of \( \hat{P}_{0} \) is to let
\( \hat{P}_{0}(\mathcal{S}_i) = 1 \) if
there exists \( x \in \mathcal{S}_i \) such that \( P_W(x) \in \mathcal{W}_o \)
and
\( \hat{P}_{0}(\mathcal{S}_i) = 0 \) otherwise.
Using this choice for \(\hat{P}_{0}\),
we can recursively compute the safety probability
\( \hat{P}_{k}(\mathcal{S}_i) \) for
every cell \(\mathcal{S}_i \in \mathcal{S}\) as
\begin{equation}%
  \label{eq/spb/naive}
  \hat{P}_{k+1}(\mathcal{S}_i) =
  \hat{P}'_{k+1}(\mathcal{S}_i) \triangleq
  \sum_{\mathcal{S}_j \in \mathcal{N}_{\mathcal{S}_i}} {
    \hat{P}_{k}( \mathcal{S}_i ) \hat{P} ( \mathcal{S}_i, \mathcal{S}_j )
  }
\end{equation}
until the desired horizon \( k = T \) is reached.
However,
note that~\eqref{eq/spb/naive}
is expected to furnish loose bounds on the safety probabilities when
the estimations of the underlying transition probabilities are not tight
or
the partition is coarse.
The reason for the latter case is that
\autoref{alg/prob/transition} computes
a worst case upper bound on the transition probability from
cell \( \mathcal{S}_i \) to cell \( \mathcal{S}_j \)
that is close to the transition probability from the worst case state
\(
X^{\star} = \mathrm{argmax}_{X^t \in \mathcal{S}_i}{
  P(x^{t+1} \in \mathcal{S}_j | X^{t} ))
}
\)
with
\( x^{t+1} \sim \mathcal{N}(X^{t+1}, \delta^{t+1}) \)
and
\( X^{t+1} = A X^{t} + B f_{\mathrm{NN}}(d^{t}(X^t)) \).
%
%
%
%
Note that if the partition is coarse,
it is likely that there are many other states in \(\mathcal{S}_i\) that have
much lower transition probabilities to \(\mathcal{S}_j\)
but are effectively treated the same as \( X^{\star} \).
If the partition is finer,
many of these states can be grouped in a different cell with
lower transition probability to the cell \(\mathcal{S}_j\).
We discuss a way to refine the partition \(\mathcal{S}\)
in \autoref{sec/refine}.
But first we describe how to improve
the safety probability bounds \(\hat{P}_k\) in~\eqref{eq/spb/naive}
for a given partition \(\mathcal{S}\).
%

%

\subsection{Partition Merging}%
\label{sec/safe/merge}

In this section,
we present a method to merge nodes in \( \mathcal{S} \)
in order to improve the transition probability bounds
used in~\eqref{eq/spb/naive}.
Specifically, we provide the following result.
\begin{proposition}%
  \label{stmt/safety/merging}
  Let \( \mathcal{S}_o \in \mathcal{S} \),
  \(\mathcal{S}_i, \mathcal{S}_j \in \mathcal{N}_{\mathcal{S}_o}\)
  and
  \( p \in (0, 1) \).
  %
  %
  Consider the set \( \mathcal{S}_{ij}' = \mathcal{S}_i \cup \mathcal{S}_j \).
  If
  \(
  \Bar{\mathcal{S}}_i(p)
  \cap
  \Bar{\mathcal{S}}_j(p) =
  \emptyset
  \),
  then
  \( \mathcal{D}' = (\mathcal{S}', \mathcal{E}', \hat{P}') \)
  with
  \( \mathcal{S}' = \mathcal{S} \cup \{ \mathcal{S}_{ij}' \} \),
  \(
  \mathcal{E}'
  =
  \mathcal{E}
  \bigcup
  \{ (\mathcal{S}_o, \mathcal{S}_{ij}') \}
  \setminus
  \{
  (\mathcal{S}_o, \mathcal{S}_{i}),
  (\mathcal{S}_o, \mathcal{S}_{j})
  \}
  \),
  \(
  \hat{P}'(\mathcal{S}_o, \mathcal{S}_{ij}' ) =
  \max(
  \max(
  \hat{P}(\mathcal{S}_o, \mathcal{S}_{i}),
  \hat{P}(\mathcal{S}_o, \mathcal{S}_{j})
  ) + p
  , 2 p
  )
  \)
  and
  \(
  \hat{P}'(\mathcal{S}_o, \mathcal{S}_{i}) = 
  \hat{P}'(\mathcal{S}_o, \mathcal{S}_{j}) = 0
  \)
  is a valid transition graph.
\end{proposition}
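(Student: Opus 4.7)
The plan is to verify that $\mathcal{D}' = (\mathcal{S}', \mathcal{E}', \hat{P}')$ satisfies \autoref{def/graph}. Since $\mathcal{E}'$ differs from $\mathcal{E}$ only by the removal of $(\mathcal{S}_o, \mathcal{S}_i), (\mathcal{S}_o, \mathcal{S}_j)$ and the addition of $(\mathcal{S}_o, \mathcal{S}_{ij}')$, and $\hat{P}'$ coincides with $\hat{P}$ on every other edge, the only inequality that needs checking is
\[
  P(x^{t+1} \in \mathcal{S}_{ij}' \mid x^t \in \mathcal{S}_o) \;\le\; \hat{P}'(\mathcal{S}_o, \mathcal{S}_{ij}').
\]
As the partition consists of pairwise-disjoint polytopes, $\mathcal{S}_i \cap \mathcal{S}_j = \emptyset$ and the left-hand side decomposes as $P(x^{t+1} \in \mathcal{S}_i \mid x^t) + P(x^{t+1} \in \mathcal{S}_j \mid x^t)$, which is the quantity I need to bound.

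Next, I would exploit the hypothesis $\Bar{\mathcal{S}}_i(p) \cap \Bar{\mathcal{S}}_j(p) = \emptyset$. Fix an arbitrary $x^t \in \mathcal{S}_o$ and let $X^{t+1} = A x^t + B f_{\mathrm{NN}}(d(x^t))$ denote the deterministic mean of $x^{t+1}$, so that $x^{t+1} \mid x^t \sim \mathcal{N}(X^{t+1}, \delta^{t+1})$. The disjointness assumption forces at least one of $X^{t+1} \notin \Bar{\mathcal{S}}_i(p)$ or $X^{t+1} \notin \Bar{\mathcal{S}}_j(p)$ to hold. By the complement characterization established just after \eqref{eq/stateset/augmented}, in the former case $P(x^{t+1} \in \mathcal{S}_i \mid x^t) < p$, and in the latter case $P(x^{t+1} \in \mathcal{S}_j \mid x^t) < p$.

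I would then conclude by a two-case analysis. In the first case, combining with the edge weight from the original graph gives $P(x^{t+1} \in \mathcal{S}_{ij}' \mid x^t) < p + \hat{P}(\mathcal{S}_o, \mathcal{S}_j)$; in the second case, the symmetric argument gives $p + \hat{P}(\mathcal{S}_o, \mathcal{S}_i)$. Taking the worse of the two yields
\[
  P(x^{t+1} \in \mathcal{S}_{ij}' \mid x^t) \;\le\; p + \max\bigl(\hat{P}(\mathcal{S}_o, \mathcal{S}_i),\, \hat{P}(\mathcal{S}_o, \mathcal{S}_j)\bigr),
\]
and since $\hat{P}'(\mathcal{S}_o, \mathcal{S}_{ij}')$ is defined as the maximum of this expression and $2p$, the required inequality holds a fortiori. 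The choices $\hat{P}'(\mathcal{S}_o, \mathcal{S}_i) = \hat{P}'(\mathcal{S}_o, \mathcal{S}_j) = 0$ are immaterial because these pairs no longer belong to $\mathcal{E}'$, and \autoref{def/graph} only constrains $\hat{P}'$ on edges of $\mathcal{E}'$.

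The main obstacle I expect is the correct invocation of the augmented-set complement property: specifically, making the leap from a statement about the \emph{mean} $X^{t+1}$ lying outside $\Bar{\mathcal{S}}_k(p)$ to the conditional probability bound $P(x^{t+1} \in \mathcal{S}_k \mid x^t) < p$. This relies on the observation that conditioning on $x^t$ determines $X^{t+1}$ deterministically through the closed-loop dynamics, so the Gaussian measure of $\mathcal{S}_k$ under $\mathcal{N}(X^{t+1}, \delta^{t+1})$ is exactly the conditional probability in question; the half-space argument following \eqref{eq/stateset/augmented} then bounds it strictly below $p$. The rest of the argument is a routine additive combination and a remark on which edges are retained.
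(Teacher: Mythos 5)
Your proof is correct and follows essentially the same route as the paper's: use the disjointness of $\Bar{\mathcal{S}}_i(p)$ and $\Bar{\mathcal{S}}_j(p)$ to conclude that the mean $X^{t+1}$ lies outside at least one of them, invoke the complement characterization of the augmented set to bound that cell's conditional probability by $p$, and bound the other cell's probability by the original edge weight. Your two-case analysis subsumes the paper's three cases (its third case, where $X^{t+1}$ lies in neither augmented set, is what forces the extra $\max(\cdot,2p)$ in the stated bound, which you correctly observe only makes $\hat{P}'$ larger and hence still valid).
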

\begin{proof}
  Let \( q_i, q_j \in (0, 1) \).
  Recall that the complement of
  \( \Bar{\mathcal{S}}_i(q_i) \)
  (resp. \( \Bar{\mathcal{S}}_j(q_j) \))
  consists only of states \( X \in \mathcal{X} \) such that
  \( P(x \in \mathcal{S}_i ) < q_i \)
  (resp. \( P(x \in \mathcal{S}_j ) < q_j \))
  for \( x \sim \mathcal{N}(X, \delta) \).
  %
  Let \( X^{t} \) be a state in \( \mathcal{S}_o \).
  Assuming 
  \(
  \Bar{\mathcal{S}}_i(q_i)
  \cap
  \Bar{\mathcal{S}}_j(q_j) =
  \emptyset
  \),
  then
  \( X^{t+1} \) must lie either in
  \( \Bar{\mathcal{S}}_i(q_i) \),
  \( \Bar{\mathcal{S}}_j(q_j) \),
  or
  the complement of their union.
  As such, the following inequalities hold:
  \begin{equation}
    P(x^{t+1} \in \mathcal{S}_{ij}' ) <
    \begin{cases}
      \hat{P}( \mathcal{S}_o, \mathcal{S}_i ) + q_j,
      & \hspace{-0.5em}
      \text{ if } X^{t+1} \in \Bar{\mathcal{S}}_i(q_i)
      \\
      \hat{P}( \mathcal{S}_o, \mathcal{S}_j ) + q_i,
      & \hspace{-0.5em}
      \text{ if } X^{t+1} \in \Bar{\mathcal{S}}_j(q_j)
      \\
      q_i + q_j,
      & \hspace{-0.5em}
      \text{ if } X^{t+1} \in
      \mathcal{X} \setminus \{\Bar{\mathcal{S}}_j, \Bar{\mathcal{S}}_j \}.
    \end{cases}
  \end{equation}
  Setting \( q_i = q_j = p \) and choosing the worst of these cases
  concludes the proof.
\end{proof}
%
\autoref{stmt/safety/merging}
introduces a new node \( \mathcal{S}'_{ij} \) to
the transition graph \( \mathcal{D}\)
that is the result of ``merging''
cells \( \mathcal{S}_i, \mathcal{S}_j \in \mathcal{N}_{\mathcal{S}_o} \)
which are far enough from each other so that
their augmented sets
\( \Bar{\mathcal{S}}_i(p) \) and \( \Bar{\mathcal{S}}_j(p) \)
do not overlap,
where \( p \in (0, 1) \) is a user specified probability threshold.
%
The algorithm is illustrated in \autoref{alg/safety/merging}.
Notice that
neither \( \mathcal{S}_i \) nor \( \mathcal{S}_j \)
are removed from \( \mathcal{D} \);
what is removed is their edges with \( \mathcal{S}_o \).
Consequently,
by repeatedly merging cells in the graph \( \mathcal{D} \)
until there are no more cells that can be merged,
one can obtain a new graph \( \mathcal{D}' \)
which has more nodes than \( \mathcal{D} \)
but is not fully connected, i.e.,
\( | \mathcal{N}'_{\mathcal{S}_o}| \leq | \mathcal{N}_{\mathcal{S}_o}| \)
for all \( \mathcal{S}_o \in \mathcal{S} \subseteq \mathcal{S}' \).
Note also that
\(
\max_{X^t \in \mathcal{S}_o}{
  P(x^{t+1} \in \mathcal{S}_i \cup \mathcal{S}_j | X^{t})
}
\)
is bounded from above by
\(
\max_{X_i^t \in \mathcal{S}_o}{ P(x_{i}^{t+1} \in \mathcal{S}_i | X_{i}^{t} ) }
+
\max_{x_j^t \in \mathcal{S}_o}{ P(x_{j}^{t+1} \in \mathcal{S}_j | X_{j}^{t} ) }
\)
for all cells \( \mathcal{S}_i, \mathcal{S}_j \in \mathcal{N}_{\mathcal{S}_o} \)%
\footnote{
  We recall that
  \( x^{t+1} \) is a random variable sampled from
  the normal distribution \( \mathcal{N}(X^{t+1}, \delta^{t+1}) \)
  and
  the probability of \( x^{t+1} \) lying in a set \( \mathcal{S}_o \)
  is obtained by
  integrating the corresponding density function over \( \mathcal{S}_o \).
}.
%
%
In addition,
\autoref{stmt/safety/merging} informs us that
\(
\hat{P}(\mathcal{S}_o, \mathcal{S}'_{ij})
\ll
\hat{P}(\mathcal{S}_o, \mathcal{S}_i)
+
\hat{P}(\mathcal{S}_o, \mathcal{S}_j)
\)
for those cells \( \mathcal{S}_i, \mathcal{S}_j \) which
are reachable from \( \mathcal{S}_o \) with high transition probabilities
but
are far away from each other,
given small probability threshold \( p \),
because
there exists no single state \( X^{t} \) in \( \mathcal{S}_o \)
which is as likely to transition to \( \mathcal{S}_i \)
as to \( \mathcal{S}_j \) (see \autoref{fig/merging}).
%
%
Finally,
in order to guarantee that the new bounds we obtain on
the safety probabilities \(P_k\) are tighter than the original ones,
two cells \( \mathcal{S}_i \) and \( \mathcal{S}_j \)
are merged only if the following condition holds:
\begin{equation}
  \label{eq/merge/condition}
  \begin{aligned}
    &P(\mathcal{S}_o, \mathcal{S}'_{ij})
    \max( \hat{P}_{k}(\mathcal{S}_i), \hat{P}_{k}(\mathcal{S}_j) )
    <
    \\ &\quad\quad\quad\quad
    P(\mathcal{S}_o, \mathcal{S}_i) \hat{P}_{k}(\mathcal{S}_i)
    +
    P(\mathcal{S}_o, \mathcal{S}_j) \hat{P}_{k}(\mathcal{S}_j).
  \end{aligned}
\end{equation}
Particularly,
this condition ensures that the terms of~\eqref{eq/spb/naive}
effectively removed by \autoref{alg/safety/merging} get replaced by
strictly lesser ones.

\begin{figure}[]
  \centering
  \begin{minipage}[t]{1\linewidth}
    \centering
    \includegraphics[height=4.0cm]{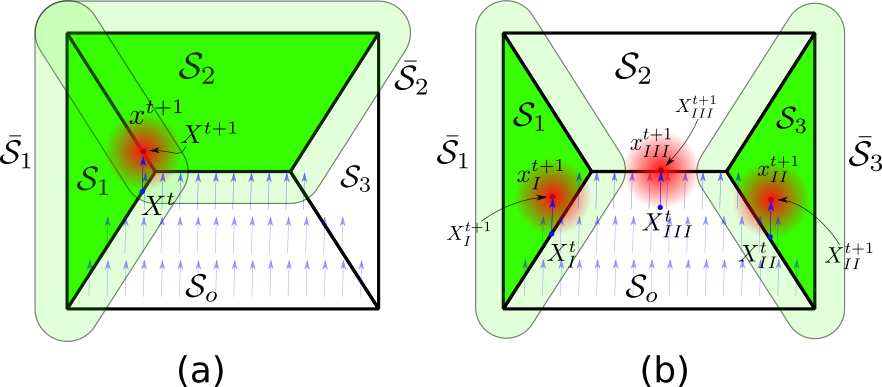}
  \end{minipage}%
  \caption{
    Example of mergeable and not mergeable pairs of cells
    according to \autoref{stmt/safety/merging}.
    The probability density function of each \( x^{t+1} \) is depicted using
    red-colored gradient.
    a)
    Cells \( \mathcal{S}_{1} \) and \( \mathcal{S}_{2} \) are not mergeable
    because the intersection of their augmented sets
    \( \Bar{\mathcal{S}_{1}} \) and \( \Bar{\mathcal{S}_{2}} \) is not empty
    and there exists \( X^{t} \in \mathcal{S}_{o} \)
    whose \( X^{t+1} \) lies in
    \( \Bar{\mathcal{S}}_{1} \cap \Bar{\mathcal{S}}_{2} \).
    As such,
    it is not possible to deduce a tight bound on the probability
    \( P(\mathcal{S}_{1} \cup \mathcal{S}_2 | \mathcal{S}_{o}) \)
    based on readily available bounds on
    \( P(\mathcal{S}_{1} | \mathcal{S}_{o}) \)
    and
    \( P(\mathcal{S}_{2} | \mathcal{S}_{o}) \);
    in this particular case,
    \( P(\mathcal{S}_{1} \cup \mathcal{S}_2 | \mathcal{S}_{o}) \approx 1 \).
    b)
    Cells \( \mathcal{S}_{1} \) and \( \mathcal{S}_{3} \) can be merged and
    a valid bound on the transition probability
    \( P(\mathcal{S}_1 \cup \mathcal{S}_3 | \mathcal{S}_o) \)
    is approximately equal to
    \( \max{( P(\mathcal{S}_1 | \mathcal{S}_o), P(\mathcal{S}_3 | \mathcal{S}_o) )} \),
    since there is no \( X^{t} \in \mathcal{S}_{o} \) with
    a good probability of landing in \( \mathcal{S}_1 \)
    and
    a good probability of landing in \( \mathcal{S}_2 \).
  }
  \label{fig/merging}
\end{figure}

\begin{algorithm}[t]
  \caption{Partition Merging}%
  \label{alg/safety/merging}
  \hspace*{0.02in} {\bf Input:} $ %
  \mathcal{D} = (\mathcal{S}, \mathcal{E}, \hat{P}), %
  \mathcal{S}_o, %
  p, %
  \hat{P}_k %
  $ \\
  \hspace*{0.02in} {\bf Output:} $ %
  \mathcal{D}' = (\mathcal{S}', \mathcal{E}', \hat{P}'), %
  \hat{P}'_{k} %
  $ \\
  \begin{algorithmic}[1]
    \STATE $ \mathcal{S}' \gets \mathcal{S} $
    \STATE $ \mathcal{E}' \gets \mathcal{E} $
    \STATE $ \hat{P}'_{k} \gets \hat{P}_{k} $
    \FOR{$ \mathcal{S}_{i} \text{ in } \mathcal{S}$}
    \FOR{$ \mathcal{S}_{j} \text{ in } \mathcal{S} \setminus \{ \mathcal{S}_i \}$}
    \IF {$ \Bar{\mathcal{S}}_i(p) \cap \Bar{\mathcal{S}}_j(p) $}
    \STATE %
    $ \mathcal{S}'_{ij} \gets \mathcal{S}_i \cup \mathcal{S}_j $
    \STATE %
    $ \mathcal{P}'(\mathcal{S}_o, \mathcal{S}'_{ij})
    \hspace{-0.2em}
    \gets
    \hspace{-0.2em}
    \max(
    \hat{P}(\mathcal{S}_o, \mathcal{S}_i),
    \hat{P}(\mathcal{S}_o, \mathcal{S}_j),
    p ) + p %
    $
    \IF{ \eqref{eq/merge/condition} holds }
    \STATE %
    $ \mathcal{S}' \gets \mathcal{S}' \cup \{ \mathcal{S}'_{ij} \} $
    \STATE %
    $ \mathcal{E}' \gets \left(
      \mathcal{E}' \setminus %
      \{ (\mathcal{S}_{o}, \mathcal{S}_{i}), (\mathcal{S}_{o}, \mathcal{S}_{j}) \}
    \right) \cup {(\mathcal{S}_o, \mathcal{S}'_{ij})} $
    \ENDIF
    \ENDIF
    \ENDFOR
    \ENDFOR
  \end{algorithmic}
\end{algorithm}

\subsection{Transition Probability Normalization}%
\label{sec/safe/tpn}

Given the transition graph \( \mathcal{D} \)
constructed in \autoref{sec/safe/merge},
next
we present an alternative way to recursively compute
tight bounds \(\hat{P}_k\) on the safety probabilities
when the underlying transition probability bounds are over-approximated.
%
%
Specifically,
we provide the following result,
which is similar to the one derived in~\cite{b20},
to truncate the sum in~\eqref{eq/spb/naive}
while ensuring that the new estimation
remains a valid upper bound of \( P_{k} \).
\begin{proposition}%
  \label{stmt/safety/tpn}
  Let \( \mathcal{S}_o \in \mathcal{S} \)
  and
  \( \hat{\kappa}_o: \mathbb{N} \mapsto \mathbb{N} \)
  such that
  \(
  \hat{P}_k(\mathcal{S}_{\hat{\kappa}_{o}}(i))
  \leq
  \hat{P}_k(\mathcal{S}_{\hat{\kappa}_{o}}(j))
  \),
  for all $i \leq j$
  with
  \(
  \mathcal{S}_{\hat{\kappa}_o(i)},
  \mathcal{S}_{\hat{\kappa}_o(j)}
  \in \mathcal{N}_{\mathcal{S}_o}
  \).
  Also,
  let
  \( n = |\mathcal{N}_{\mathcal{S}_o}| \)
  and
  \( \hat{m} \) such that
  \(
  \sum_{i=\hat{m}+1}^{n} \hat{P}( \mathcal{S}_o, \mathcal{S}_{\hat{\kappa}_o(i)}) \leq 1
  \).
  %
  Then,
  \( P_{k+1}(x) \leq \hat{P}''_{k+1}(x) \) for all \( x \in \mathcal{S}_o \),
  where
  \begin{equation}%
    \label{eq/safety/tpn/0}
    \begin{aligned}
      \hat{P}''_{k+1}(x) &=
      \sum_{i=\hat{m}+1}^{n} {
        \hat{P}(\mathcal{S}_o, \mathcal{S}_{\hat{\kappa}_o(i)})
        \hat{P}_k(\mathcal{S}_{\hat{\kappa}_o(i)})
      }
      \\ &\quad\quad +
      \left(
        1 - \sum_{i=\hat{m}+1}^n {
          \hat{P}(\mathcal{S}_o, \mathcal{S}_{\hat{\kappa}_o(i)})
        }
      \right)
      \hat{P}_k(\mathcal{S}_{\hat{\kappa}_o(\hat{m})})
      ,
    \end{aligned}
  \end{equation}
  Additionally,
  if \( \hat{m} \) such that
  \(
  \sum_{i=\hat{m}}^{n} \hat{P}( \mathcal{S}_o, \mathcal{S}_{\hat{\kappa}_o(i)}) > 1
  \),
  then
  \( \hat{P}''_{k+1}(x) < \hat{P}'_{k+1}(x) \)
  for all \( x \in \mathcal{S}_o \).
\end{proposition}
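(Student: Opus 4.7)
The plan is to recognize $\hat{P}''_{k+1}(x)$ as the value of a greedy water-filling solution to a linear program that bounds $P_{k+1}(x)$ from above. Writing $p_i := P(x^{t+1} \in \mathcal{S}_{\hat{\kappa}_o(i)} \mid x^t)$ for the true (unknown) one-step transition probabilities from an arbitrary $x^t \in \mathcal{S}_o$, the same decomposition used in the proof of Proposition \ref{stmt/safety/obvious} gives
$$P_{k+1}(x^t) = \sum_{i=1}^n p_i P_k(\mathcal{S}_{\hat{\kappa}_o(i)}) \leq \sum_{i=1}^n p_i \hat{P}_k(\mathcal{S}_{\hat{\kappa}_o(i)}),$$
where the $p_i$ satisfy the linear constraints $0 \leq p_i \leq \hat{P}(\mathcal{S}_o, \mathcal{S}_{\hat{\kappa}_o(i)})$ and $\sum_i p_i = 1$ (because $\mathcal{S}$ partitions $\mathcal{X}$). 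It therefore suffices to upper-bound the maximum of the linear functional $\sum_i p_i \hat{P}_k(\mathcal{S}_{\hat{\kappa}_o(i)})$ over this polytope.

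Second, because the coefficients $\hat{P}_k(\mathcal{S}_{\hat{\kappa}_o(i)})$ are presorted nondecreasingly in $i$ by construction of $\hat{\kappa}_o$, a standard rearrangement argument tells us that the LP maximum is attained by saturating the largest-coefficient indices first: set $p_i = \hat{P}(\mathcal{S}_o, \mathcal{S}_{\hat{\kappa}_o(i)})$ for all $i > \hat{m}$ (feasible since the tail sum is at most $1$ by the hypothesis on $\hat{m}$), pour the residual mass $1 - \sum_{i>\hat{m}} \hat{P}(\mathcal{S}_o, \mathcal{S}_{\hat{\kappa}_o(i)})$ onto index $\hat{m}$, and zero out the remaining indices. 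Substituting this allocation into the objective yields exactly the expression for $\hat{P}''_{k+1}(x)$ in \eqref{eq/safety/tpn/0}.

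For the strictness claim, I would form the pointwise difference and collect the terms for $i > \hat{m}$ (which cancel) and the $i = \hat{m}$ contribution (which recombines), obtaining
$$\hat{P}'_{k+1}(x) - \hat{P}''_{k+1}(x) = \sum_{i=1}^{\hat{m}-1} \hat{P}(\mathcal{S}_o, \mathcal{S}_{\hat{\kappa}_o(i)}) \hat{P}_k(\mathcal{S}_{\hat{\kappa}_o(i)}) + \hat{P}_k(\mathcal{S}_{\hat{\kappa}_o(\hat{m})}) \Bigl( \sum_{i=\hat{m}}^n \hat{P}(\mathcal{S}_o, \mathcal{S}_{\hat{\kappa}_o(i)}) - 1 \Bigr).$$
Both summands are nonnegative, and the second is strictly positive under the extra hypothesis $\sum_{i=\hat{m}}^n \hat{P}(\mathcal{S}_o, \mathcal{S}_{\hat{\kappa}_o(i)}) > 1$, assuming the nondegenerate case $\hat{P}_k(\mathcal{S}_{\hat{\kappa}_o(\hat{m})}) > 0$.

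The main obstacle I anticipate is the boundary case in which the residual mass $1 - \sum_{i>\hat{m}} \hat{P}(\cdot)$ exceeds the individual upper bound $\hat{P}(\mathcal{S}_o, \mathcal{S}_{\hat{\kappa}_o(\hat{m})})$, so the greedy allocation used above is itself \emph{infeasible}. This does not invalidate the upper-bound claim, since any feasible $p$ can only spill the extra mass onto indices $i < \hat{m}$ with strictly smaller coefficients, thereby lowering the objective; but a clean argument must verify directly that $\hat{P}''_{k+1}(x) \geq \sum_i p_i \hat{P}_k(\mathcal{S}_{\hat{\kappa}_o(i)})$ for every feasible $p$, by splitting $\sum_i p_i \hat{P}_k(\mathcal{S}_{\hat{\kappa}_o(i)}) = \sum_{i>\hat{m}} p_i \hat{P}_k^i + \sum_{i \leq \hat{m}} p_i \hat{P}_k^i$, using $\hat{P}_k^i \geq \hat{P}_k^{\hat{m}}$ for $i > \hat{m}$ on the first sum and $\hat{P}_k^i \leq \hat{P}_k^{\hat{m}}$ for $i \leq \hat{m}$ on the second, and recombining via $\sum_{i \leq \hat{m}} p_i = 1 - \sum_{i>\hat{m}} p_i$, rather than reading the inequality off from a specific allocation.
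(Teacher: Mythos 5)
Your proposal is correct and follows essentially the same route as the paper's proof: both reduce to the sum $\sum_i p_i \hat{P}_k(\mathcal{S}_{\hat{\kappa}_o(i)})$ over the true transition probabilities $p_i$ with $\sum_i p_i = 1$, bound the contribution of the indices $i \leq \hat{m}$ by their total mass times the largest coefficient $\hat{P}_k(\mathcal{S}_{\hat{\kappa}_o(\hat{m})})$, and then pass from $p_i$ to $\hat{P}(\mathcal{S}_o,\cdot)$ using the sortedness of $\hat{\kappa}_o$, while your strictness computation is the explicit-difference version of the paper's implication chain. Your LP/water-filling framing and the direct verification in your final paragraph (handling the case where the greedy allocation is infeasible) are in fact tighter than the paper's own sketch, and your caveat that strictness requires $\hat{P}_k(\mathcal{S}_{\hat{\kappa}_o(\hat{m})}) > 0$ flags a degenerate case the paper silently assumes away.
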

\begin{proof}
  We begin by noticing that
  \begin{equation}%
    \label{eq/safety/tpn/1}
    \begin{aligned}
      P_{k+1}(x^t) &\leq
      \sum_{\mathcal{S}_i \in \mathcal{N}_{\mathcal{S}_{o}}}{
        P( \mathcal{S}_o, \mathcal{S}_i )
        \cdot
        P_{k}(\mathcal{S}_i)
      }
      ,~~~ \forall x^{t} \in \mathcal{S}_o,
    \end{aligned}
  \end{equation}
  where
  \(
  P(\mathcal{S}_{o}, \mathcal{S}_{i})
  \triangleq
  P(x^{t+1} \in \mathcal{S}_i | x^t \in \mathcal{S}_o)
  \)
  and
  \( P_{k} \)
  denote
  the lowest upper bounds on the safety probabilities, i.e.,
  \( P_{k}(\cdot) \leq \hat{P}_{k}(\cdot) \).
  We recall that
  \(
  \sum_{i=1}^{\hat{m}}{
    P( \mathcal{S}_o, \mathcal{S}_{\hat{\kappa}_o(i)} )
  }
  =
  1 - \sum_{i=\hat{m}+1}^{n}{
    P( \mathcal{S}_o, \mathcal{S}_{\hat{\kappa}_o(i)} )
  }
  \).
  By separating the r.h.s. of \eqref{eq/safety/tpn/1} into two sums,
  one consisting of the \( n-\hat{m} \) terms corresponding to
  the cells with
  the highest estimated safety probability bounds \( \hat{P}_{k} \)
  and
  one consisting of the rest,
  we get
  \begin{equation}%
    \label{eq/safety/tpn/2}
    \begin{aligned}
      P_{k+1}(x^t) &\leq
      \sum_{i=\hat{m}+1}^{n}{
        P( \mathcal{S}_o, \mathcal{S}_{\hat{\kappa}_o(i)} )
        \cdot
        P_{k}(\mathcal{S}_{\hat{\kappa}_o(i)})
      }
      \\ & \quad\quad +
      P_{k}(\mathcal{S}_{\hat{\kappa}_o(m)})
      \cdot
      \left(
        1 - \sum_{i=\hat{m}+1}^{n}{
          P( \mathcal{S}_o, \mathcal{S}_{\hat{\kappa}_o(i)} )
        }
      \right),
    \end{aligned}
  \end{equation}
  where
  \(
  m
  =
  \mathrm{argmax}_{i \in \{1, 2, \ldots, m\}}( P_{k}(\mathcal{S}_{\hat{\kappa}_o(i)}) )
  \).
  In general,
  \( m \leq \hat{m} \)
  but
  one can readily see that
  \(
  P_{k}(\mathcal{S}_{\hat{\kappa}_{o}(m)})
  \leq
  \hat{P}_{k}(\mathcal{S}_{\hat{\kappa}_{o}(m)})
  \leq
  \hat{P}_{k}(\mathcal{S}_{\hat{\kappa}_{o}(\hat{m})})
  \).
  Therefore,
  one can verify that
  the r.h.s. of the above inequality is bounded from above by
  \begin{equation}%
    \label{eq/safety/tpn/3}
    \begin{aligned}
      &
      \sum_{i=\hat{m}+1}^{n}{
        P( \mathcal{S}_o, \mathcal{S}_{\hat{\kappa}_o(i)} )
        \cdot
        \hat{P}_{k}(\mathcal{S}_{\hat{\kappa}_o(i)})
      }
      \\ &\quad +
      \hat{P}_{k}(\mathcal{S}_{\hat{\kappa}_o(\hat{m})})
      \cdot
      \left(
        1 - \sum_{i=\hat{m}+1}^{n}{
          P( \mathcal{S}_o, \mathcal{S}_{\hat{\kappa}_o(i)} )
        }
      \right).
    \end{aligned}
  \end{equation}
  %
  By subtracting \eqref{eq/safety/tpn/3} from
  the r.h.s. of \eqref{eq/safety/tpn/0}
  and recalling that
  \( \hat{P}(\mathcal{S}_o, \cdot) > P(\mathcal{S}_o, \cdot) \)
  and
  \(
  \hat{P}_{k}(\mathcal{S}_{\hat{\kappa}_o(i)})
  \geq
  \hat{P}_{k}(\mathcal{S}_{\hat{\kappa}_o(\hat{m})})
  \)
  by construction of \( \hat\kappa_o \),
  we can verify that
  \( \hat{P}''_{k+1}(x) \) is a valid upper bound of \( P_{k+1} \).
  To show that \( \hat{P}''_{k+1}(x) \leq \hat{P}'_{k+1}(x) \)
  when
  \(
  \sum_{i=\hat{m}}^{n} \hat{P}( \mathcal{S}_o, \mathcal{S}_{\hat{\kappa}_o}(i)) > 1
  \),
  we notice that this implies that
  \begin{equation}
    \begin{aligned}
      &
      \hat{P}(\mathcal{S}_{o}, \mathcal{S}_{\hat{\kappa}_o(\hat{m})}) >
      1 - 
      \sum_{i=\hat{m}+1}^{n} \hat{P}( \mathcal{S}_o, \mathcal{S}_{\hat{\kappa}_o}(i))
      \implies \\ &
      \hat{P}(\mathcal{S}_{o}, \mathcal{S}_{\kappa_o(\hat{m})})
      \hat{P}_{k}(\mathcal{S}_{\hat{\kappa}_o(\hat{m})})
      > \\ &\quad\quad\quad
      \left(
        1 - 
        \sum_{i=\hat{m}+1}^{n} \hat{P}( \mathcal{S}_o, \mathcal{S}_{\hat{\kappa}_o}(i))
      \right)
      \hat{P}_{k}(\mathcal{S}_{\hat{\kappa}_o(i)})
      \implies \\ &
      \sum_{i=1}^{\hat{m}} {
        \hat{P}(\mathcal{S}_{o}, \mathcal{S}_{\kappa_o(i)})
        \hat{P}_{k}(\mathcal{S}_{\hat{\kappa}_o(i)}) }
      > \\ &\quad\quad\quad
      \left(
        1 - 
        \sum_{i=\hat{m}+1}^{n} \hat{P}( \mathcal{S}_o, \mathcal{S}_{\hat{\kappa}_o}(i))
      \right)
      \hat{P}_{k}(\mathcal{S}_{\hat{\kappa}_o(\hat{m})})
    \end{aligned}
  \end{equation}
\end{proof}
%
\autoref{stmt/safety/tpn}
provides an alternative formula to~\eqref{eq/spb/naive}
to recursively estimate the safety probability bounds \(\hat{P}_k\)
that uses the \( n - \hat{m} \) cells
with the largest estimated safety probabilities \( \hat{P}_k \)
in order to obtain a tighter bound on the safety probability
by mitigating the over-approximation of
the transition probability bounds.

\subsection{Verification Framework}%
\label{sec/safety/framework}

Given a graph \( \mathcal{D} \),
upper bounds on
the transition and safety probabilities
and a user specified horizon T,
we now present an algorithm to compute \( \hat{P}_{k} \),
which is illustrated in~\autoref{alg/safety/verif}.
%
Let \( p \in (0, 1) \)
denote a desired probability threshold.
Then,
for each \( i \in 1, 2, \ldots, T \),
firstly we apply \autoref{alg/safety/merging} iteratively until
no cells remain in the new graph \( \mathcal{D}' \) which can be merged
(lines
\autoref{algline/safety/verif/merge/start}-%
\autoref{algline/safety/verif/merge/end}).
Finally,
for each cell \( \mathcal{S}_j \) of the original graph \( \mathcal{D} \),
we compute the safety probability bound \( \hat{P}_{i}(\mathcal{S}_j) \) using
\eqref{eq/safety/tpn/0}
(lines
\autoref{algline/safety/verif/propagate/start}-%
\autoref{algline/safety/verif/propagate/end}).
As such,
by virtue of \autoref{stmt/safety/merging} and \autoref{stmt/safety/tpn},
the bounds \( \hat{P}_{T} \) obtained using \autoref{alg/safety/verif} are
valid upper bounds on the safety probability and are guaranteed to be
at least as tight as the ones obtained by applying
\eqref{eq/spb/naive} on \( \mathcal{D} \).
We validate this result in~\autoref{sec/sim}.

\begin{algorithm}[t]
  \caption{Verifcation Framework}%
  \label{alg/safety/verif}
  \hspace*{0.02in} {\bf Input:} \(\mathcal{D}\), $\hat{P}_{0}$, $p$  \\
  \hspace*{0.02in} {\bf Output:} $\hat{P}_{T}$
  \begin{algorithmic}[1]
    \FOR{$i \text{ in } 1, 2, \ldots, T$}
    \STATE $\mathcal{D'} \gets \mathcal{D}$ %
    \label{algline/safety/verif/merge/start}
    \REPEAT
    \STATE $\mathcal{D''} \gets \mathcal{D'}$ %
    \FOR{$j \text{ in } 1, 2, \ldots, |\mathcal{S}'|$}
    \STATE $\mathcal{D''}, \hat{P}_{i-1} \gets$ %
    MergeCells($\mathcal{D''}, \mathcal{S}'_j, p, \hat{P}_{i-1}$)
    \ENDFOR
    \STATE flag $ \gets \mathcal{D}'' = \mathcal{D}' $
    \STATE $ \mathcal{D}' \gets \mathcal{D}'' $
    \UNTIL{ \textbf{not} flag }
    \label{algline/safety/verif/merge/end}
    \FOR{$j \text{ in } 1, 2, \ldots, |\mathcal{S}|$}
    \label{algline/safety/verif/propagate/start}
    \STATE $\hat{P}_{i}(\mathcal{S}'_j) \gets $ %
    PropagateSafetyProb($\mathcal{S}_j, \mathcal{N}'_{\mathcal{S}_{j}}, \hat{P}_{i-1}$)
    \label{algline/safety/verif/propagate/end}
    \ENDFOR
    \ENDFOR
  \end{algorithmic}
\end{algorithm}



\section{SNN-SMC based Refinement}%
\label{sec/refine}

\begin{figure}[]
  \centering
  \begin{minipage}[t]{1\linewidth}
    \centering
    \includegraphics[height=4.0cm]{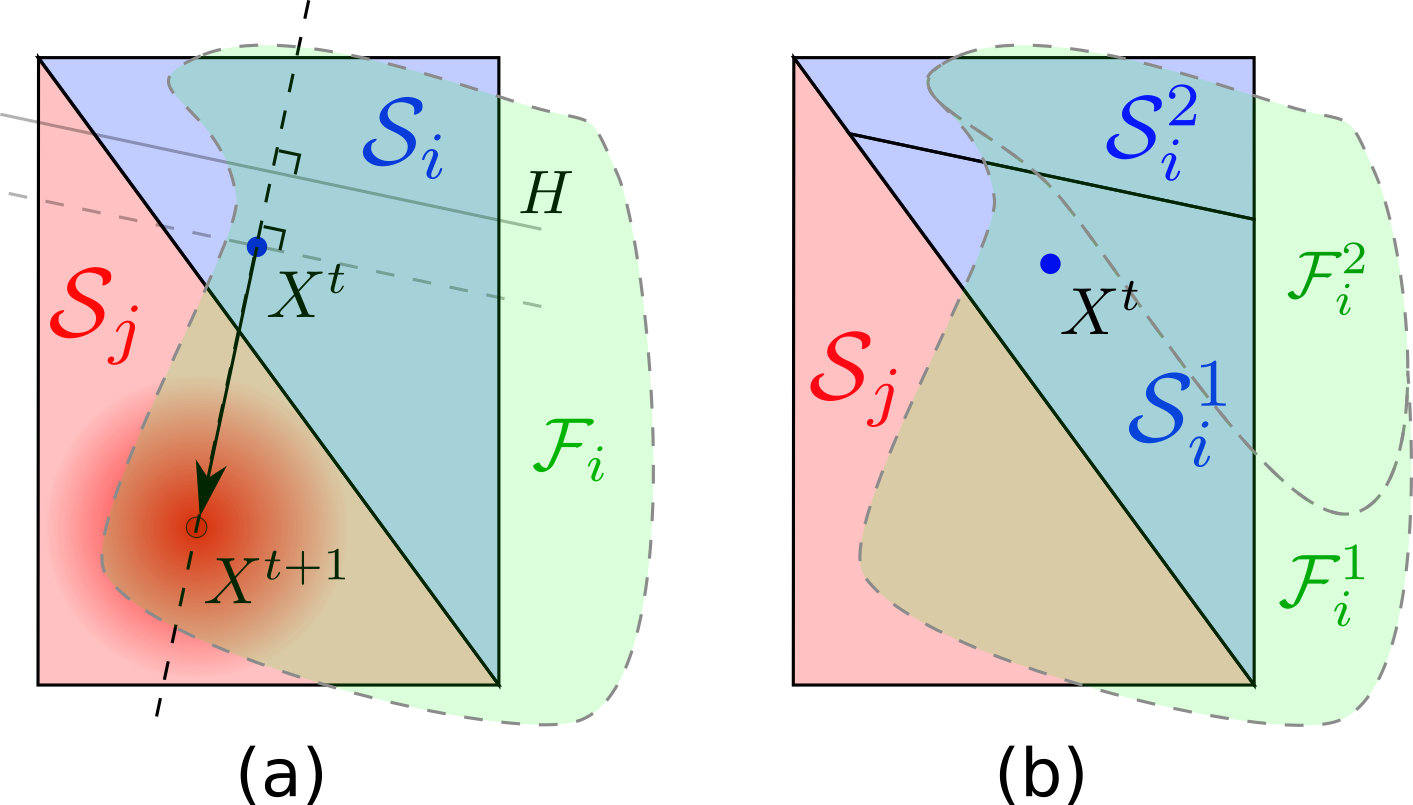}
  \end{minipage}%
  \caption{
    Subdivision of cell \( \mathcal{S}_i \) w.r.t.
    the transition probability \( P(\mathcal{S}_i, \mathcal{S}_j) \)
    into two subcells \( \mathcal{S}_i^{1} \) and \( \mathcal{S}_i^{2} \).
    The forward reachable sets
    \( \mathcal{F}_i, \mathcal{F}_{i}^{1}, \mathcal{F}_{i}^2 \)
    of cells \( \mathcal{S}_i, \mathcal{S}_i^{1}, \mathcal{S}_i^{2} \)
    under the dynamics of the closed-loop system
    \( X^{t+1} = A X^{t} + B f_{\mathrm{NN}}(d^{t}(X^{t})) \).
    are depicted in green,
    respectively.
    We notice that the selection of hyperplane \( H \) minimizes
    the probability of a state \( X^{t} \in \mathcal{S}_{i}^{2} \)
    to land in \( \mathcal{S}_j \),
    by placing a sufficiently large neighborhood of
    \( X^{t} \in \mathcal{X}^{\star}_{ij} \) strictly inside
    \( X^{t} \in \mathcal{S}_{i}^{1} \).
  }
  \label{fig/refinement}
\end{figure}

In this section,
we present a method to refine the cells of
a given partition $ \mathcal{S} $
in order to obtain tighter upper bounds on
the safety probabilities \( \hat{P}_k \)
compared to those obtained for a coarser initial partition.
%
We begin by presenting the following proposition which
provides bounds on the transition probabilities of
the cells \( \mathcal{S}_i^{1}, \mathcal{S}_i^{2} \)
obtained by cutting a cell \( \mathcal{S}_i \in \mathcal{S} \)
into two disjoint cells separated by a hyperplane \( H \).
%
%
%
%
\begin{proposition}
  Given
  the transition graph \(\mathcal{D}\)
  and
  safety probabilities \( \hat{P}_k \),
  let \( H \) be a hyperplane splitting \(\mathcal{S}_i \in \mathcal{S} \)
  into \(\mathcal{S}^1_i\) and \(\mathcal{S}^2_i\).
  For all \( \mathcal{S}_j \in \mathcal{N}_{\mathcal{S}_{i}} \),
  if
  \(
  \hat{P}(\mathcal{S}^1_i, \mathcal{S}_j)
  =
  \hat{P}(\mathcal{S}_i, \mathcal{S}_j)
  \),
  then
  \(
  \hat{P}(\mathcal{S}^2_i, \mathcal{S}_j)
  \leq
  \hat{P}(\mathcal{S}_i, \mathcal{S}_j)
  \).
  Also,
  the transition graph
  \(
  \mathcal{D}'
  =
  (\mathcal{S}', \mathcal{E}', \hat{P}')
  \)
  is valid,
  where
  \( \mathcal{S}' \) and \( \mathcal{E}' \)
  are obtained by replacing the cell \( \mathcal{S}_i \)
  with \( \mathcal{S}^1_i \) and \( \mathcal{S}^2_i \).
  Moreover,
  \(
  \hat{P}_{k+1}(\mathcal{S}^{l}_i)\leq \hat{P}_{k+1}(\mathcal{S}_i)
  \)
  for all \( l \in \{1, 2\} \).
\end{proposition}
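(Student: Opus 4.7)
The plan is to leverage the monotonicity of the underlying worst-case quantities: both the transition-probability bound produced by \autoref{alg/prob/transition} and the safety-probability bound from \autoref{stmt/safety/obvious} are suprema over all states in a cell, and a supremum taken over a subset cannot exceed the one taken over the whole set. Once this observation is in place, the three claims of the proposition follow from essentially one-line arguments.

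For the first claim, I would begin by noting that $\hat{P}(\mathcal{S}_i, \mathcal{S}_j)$ upper bounds $\sup_{X^t \in \mathcal{S}_i} P(x^{t+1} \in \mathcal{S}_j \mid X^t)$ and, since $\mathcal{S}_i = \mathcal{S}^1_i \cup \mathcal{S}^2_i$, this supremum equals $\max\bigl( \sup_{\mathcal{S}^1_i} P(\cdot), \sup_{\mathcal{S}^2_i} P(\cdot) \bigr)$. The hypothesis that $\hat{P}(\mathcal{S}^1_i, \mathcal{S}_j) = \hat{P}(\mathcal{S}_i, \mathcal{S}_j)$ asserts that, up to the precision of \autoref{alg/prob/transition}, the supremum is attained inside $\mathcal{S}^1_i$, which forces the corresponding quantity over $\mathcal{S}^2_i$ to be no larger, giving $\hat{P}(\mathcal{S}^2_i, \mathcal{S}_j) \leq \hat{P}(\mathcal{S}_i, \mathcal{S}_j)$.

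The validity of the refined graph $\mathcal{D}'$ then follows from the same subset-monotonicity applied without the extra hypothesis: for each $l \in \{1,2\}$ and each $X^t \in \mathcal{S}^l_i \subseteq \mathcal{S}_i$, the single-step transition probability to $\mathcal{S}_j$ is still bounded by $\hat{P}(\mathcal{S}_i, \mathcal{S}_j)$, so the edge weights one would obtain by running \autoref{alg/prob/transition} on the subcells satisfy $\hat{P}'(\mathcal{S}^l_i, \mathcal{S}_j) \leq \hat{P}(\mathcal{S}_i, \mathcal{S}_j)$, meeting the requirement of \autoref{def/graph}. For the third claim I would expand $\hat{P}_{k+1}(\mathcal{S}^l_i)$ via \autoref{stmt/safety/obvious} and combine two inclusions: $\mathcal{N}_{\mathcal{S}^l_i} \subseteq \mathcal{N}_{\mathcal{S}_i}$ (any one-step successor reachable from the subcell is a fortiori reachable from $\mathcal{S}_i$) and the pointwise inequality $\hat{P}'(\mathcal{S}^l_i, \mathcal{S}_j) \leq \hat{P}(\mathcal{S}_i, \mathcal{S}_j)$ just established, so the resulting sum is term-by-term dominated by $\hat{P}_{k+1}(\mathcal{S}_i)$.

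The main obstacle is a minor but easy-to-miss subtlety: \autoref{alg/prob/transition} only returns an over-estimate (to precision $dq$) of the true worst-case transition probability, so the subset-monotonicity must be phrased at the level of the underlying probabilities and only then transferred back to the algorithm's output. Once the inequalities are read in that order, no information about the particular hyperplane $H$ is used in the proof; the geometric picture in Figure~\ref{fig/refinement} explaining how to pick $H$ only motivates when one can expect the inequality in the first claim to be \emph{strict}, and thus when the refinement is genuinely useful in practice, but it is not needed for soundness.
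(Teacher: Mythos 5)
Your argument is correct and takes essentially the same route as the paper, which in fact gives no formal proof of this proposition at all---only the informal remark that the SNN-SMC problem restricted to a subcell \(\mathcal{S}_i^{\ell}\subseteq\mathcal{S}_i\) has fewer feasible states \(X^t\), so unsatisfiability at a threshold \(q\) is inherited and \autoref{alg/prob/transition} cannot return a larger bound; this is precisely the subset-monotonicity you formalize and then propagate term-by-term through \eqref{eq/spb/naive}. Your observations that the hypothesis on \(\mathcal{S}^1_i\) is not actually needed for the bound on \(\mathcal{S}^2_i\), and that the choice of \(H\) only governs when the inequality is strict, are likewise consistent with the paper's discussion.
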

In words,
given \( \mathcal{S}_j \in \mathcal{N}_{\mathcal{S}_i} \),
the transition probability bound
\( \hat{P}(\mathcal{S}_i^{\ell}, \mathcal{S}_j) \)
computed using \autoref{alg/prob/transition} is the same as
\( \hat{P}(\mathcal{S}_i, \mathcal{S}_j) \)
for at least one \( \ell \in \{1, 2\} \).
The reason is that satisfiability of the SNN-SMC problem
in \autoref{alg/prob/transition}
implies that there exists at least one
\(
X^t \in \mathcal{S}_i = \mathcal{S}_i^{1} \cup \mathcal{S}_i^{2}
\)
that marginally satisfies
the inequality
\(
P(x^{t+1} \in \mathcal{S}_j | X^t)
\geq
\hat{P}(\mathcal{S}_i, \mathcal{S}_j) - dq
\).
Let \( \mathcal{X}^{\star}_{ij} \) denote the set of all these \( X^{t} \).
Assuming that \( \mathcal{X}^{\star}_{ij} \) lies strictly
in the interior of \( \mathcal{S}_i \),
it can be placed in one of the subcells (e.g., \( \mathcal{S}_i^1 \))
through a proper selection of the hyperplane \( H \)
so that
the transition probability of the other subcell
(e.g.,
\(
\hat{P}(\mathcal{S}_i^{2}, \mathcal{S}_j)
\))
becomes strictly less than
\( \hat{P}(\mathcal{S}_i, \mathcal{S}_j) \)
(see \autoref{fig/refinement}).
%
%
To find this hyperplane,
we consider a state \( X^{t} \in \mathcal{X}^{\star}_{ij} \)
that marginally satisfies the SNN-SMC problem for
\( \mathcal{S}_i, \mathcal{S}_j \)
and
define the hyperplane \( H \)
that is perpendicular to the line connecting
\( X^{t}, X^{t+1} \)
and
contains \( X^{t} \),
where
\( X^{t+1} = A X^{t} + B f_{\mathrm{NN}}(d(X^t)) \).
%
Assuming that all \( X^{t} \in \mathcal{X}^{\star}_{ij} \) lie on
the same half-space defined by \( H \),
translating this hyperplane away from \( \mathcal{X}^{\star}_{ij} \)
will decrease the transition probability bound
from one of the new cells \( \mathcal{S}_i^1 \) or \( \mathcal{S}_j^2 \)
to \( \mathcal{S}_j \).
Thus,
given a pair of cells
\( \mathcal{S}_i \in \mathcal{S} \)
and
\( \mathcal{S}_j \in \mathcal{N}_{\mathcal{S}_i} \),
we can refine the partition \(S\) by translating the hyperplane \(H\) so that
either
\( \hat{P}(\mathcal{S}_i^1, \mathcal{S}_j) \)
or
\( \hat{P}(\mathcal{S}_i^2, \mathcal{S}_j) \)
is minimized.

\section{Numerical Experiments}%
\label{sec/sim}

In this section,
we present simulation results
to validate the proposed bounds on the probability that
a point-sized robot collides with
the boundary of the non-convex planar workspace \( \mathcal{W} \).
%
%
Particularly,
we consider a scenario similar to the one in~\cite{b1}
and
assume that the robot's dynamics can be modeled as a single-integrator, i.e.:
\begin{equation}
  x_{t+1} = x_{t} + f_{NN}(d(x_t)) + w^{t},
\end{equation}
where
$x^t\in\mathbb{R}^2$ denotes the robot's position at time step \(t\)
and
$w^t \sim \mathcal{N}(0, 3)$.
%
%
\begin{figure}[]
  \centering
  \begin{minipage}[t]{1\linewidth}
    \centering
    \includegraphics[width=8cm,height=6.0cm]{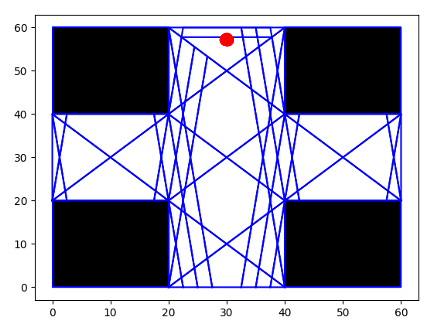}
  \end{minipage}%
  \caption{
    Compact workspace \(\mathcal{W}\) and partition \( \mathcal{S} \).
    Areas shaded in black correspond to obstacles whereas blue lines indicate
    the boundaries of the convex cells.
    The red dot marks the goal position of the robot.
  }
  \label{fig/sim/ws}
\end{figure}
Additionally,
we assume that the robot is equipped with a LiDAR scanner that
emits a set of \(q\) lasers evenly distributed in a \(2 \pi ~ rad\) fan,
i.e.,
\(
d(x^t) = [(d_0(x^t))^T, (d_1(x^t))^T, ..., (d_q(x^t))^T]^T,
\)
where
\(
d_i(x^t) = [r_i(x^t)\cos{\theta_i}, r_i(x^t)\sin{\theta_i}]^T
\)
and
\( r_i(x^t) \)
denotes the distance measured between
the robot and
the closest obstacle in the direction \( [\cos(\theta_i), \sin(\theta_i)] \),
for all \(i \in \{1, 2, \ldots q\}\).
To drive the robotic system
to a predetermined goal position using
only the feedback \( d(x^t) \),
we employed a ReLU neural network controller \(f_{\mathrm{NN}}\)
consisting of three hidden layers and a total of 32 neurons.
Lastly,
we used the partitioning method proposed in~\cite{b1}
to partition the domain \( \mathcal{X} \),
which in this scenario coincides with the workspace \( \mathcal{W} \).
This partition is depicted in~\autoref{fig/sim/ws}.


\begin{figure*}
  \centering
  \subfigure[]{
    \begin{minipage}[h]{0.3\linewidth}
      \centering
      \includegraphics[width = 5cm]{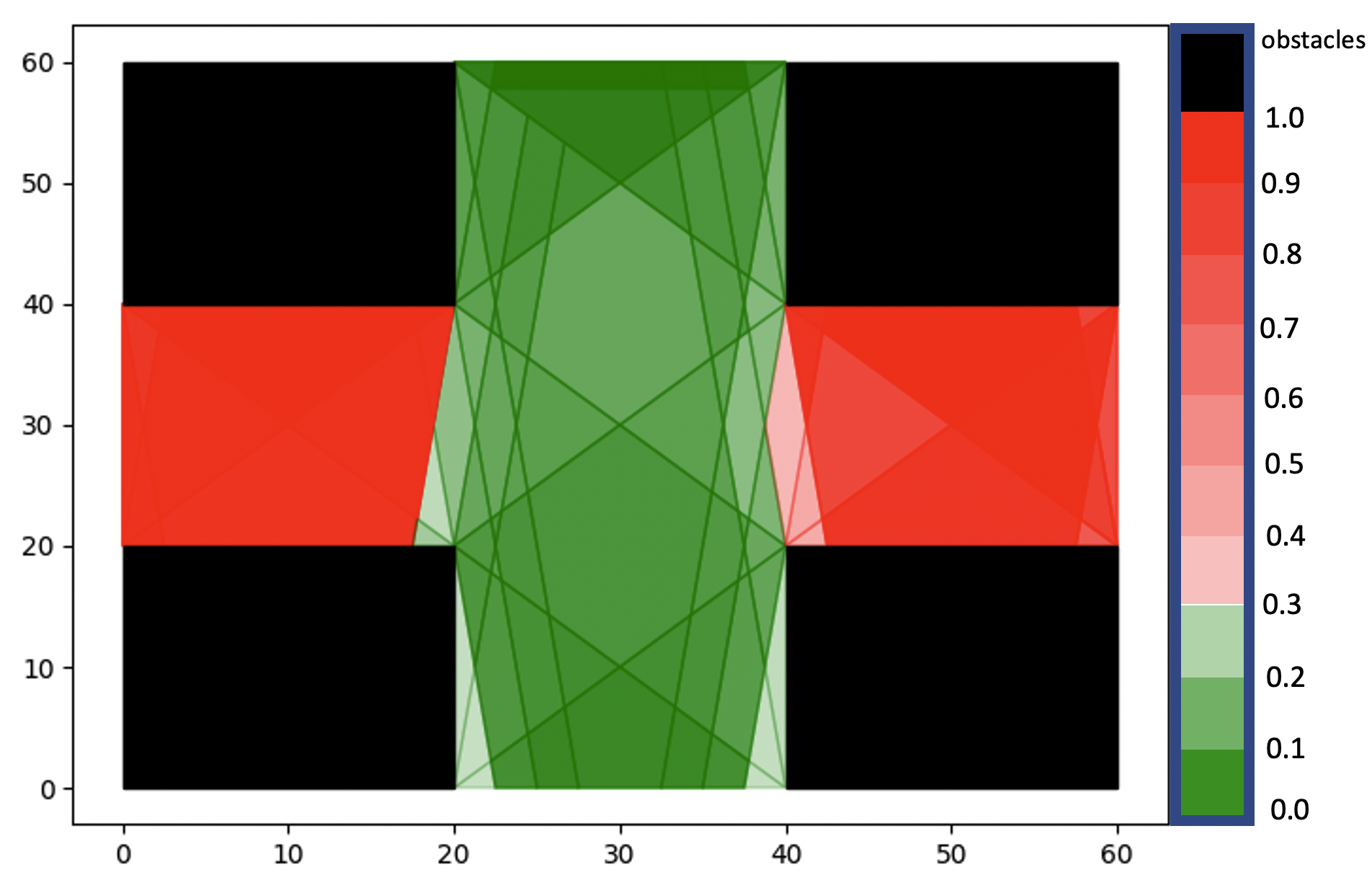}
    \end{minipage}
  }
  \subfigure[]{
    \begin{minipage}[h]{0.3\linewidth}
      \centering
      \includegraphics[width = 5cm]{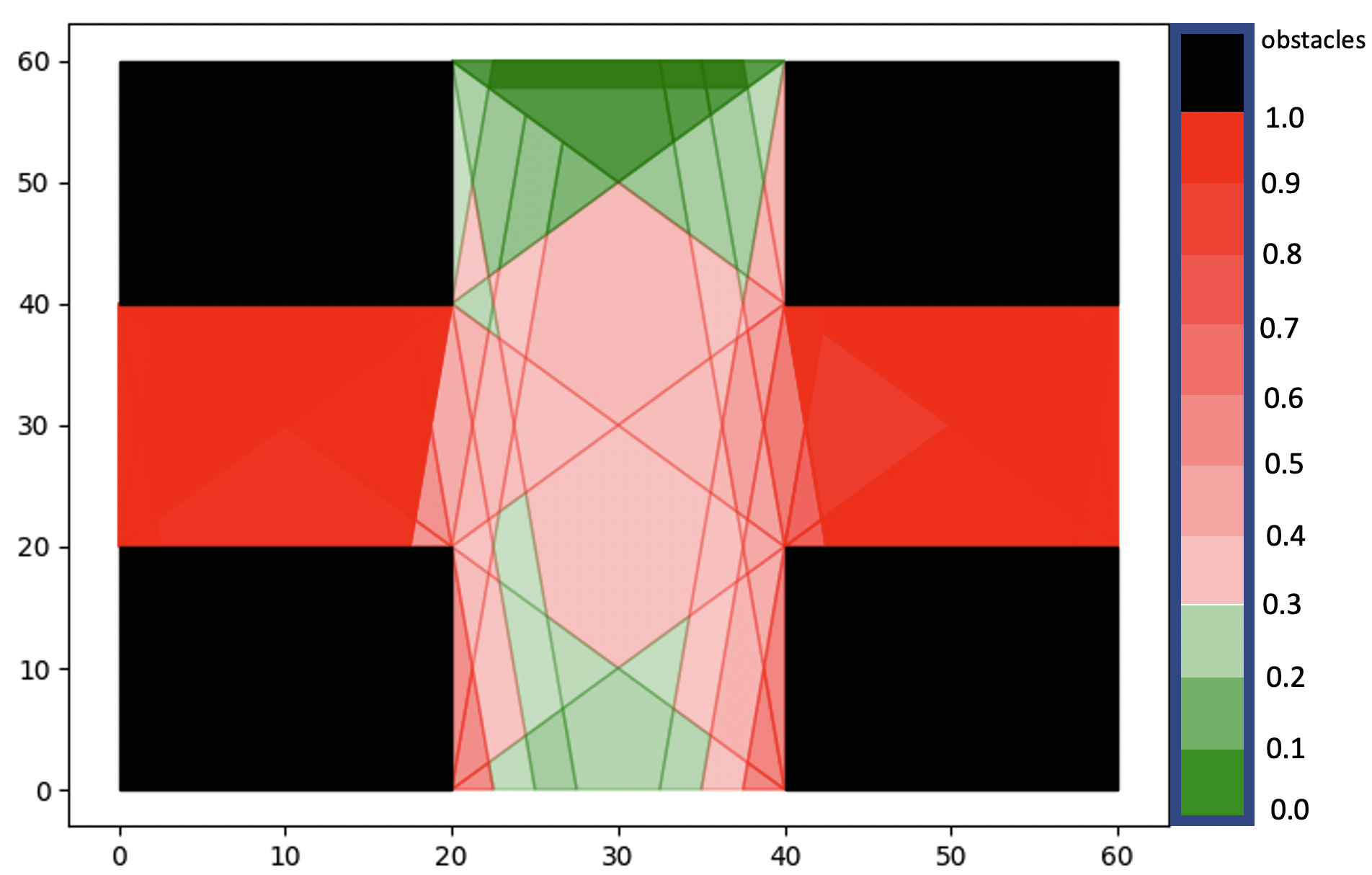}
    \end{minipage}
  }
  \subfigure[]{
    \begin{minipage}[h]{0.3\linewidth}
      \centering
      \includegraphics[width = 5cm]{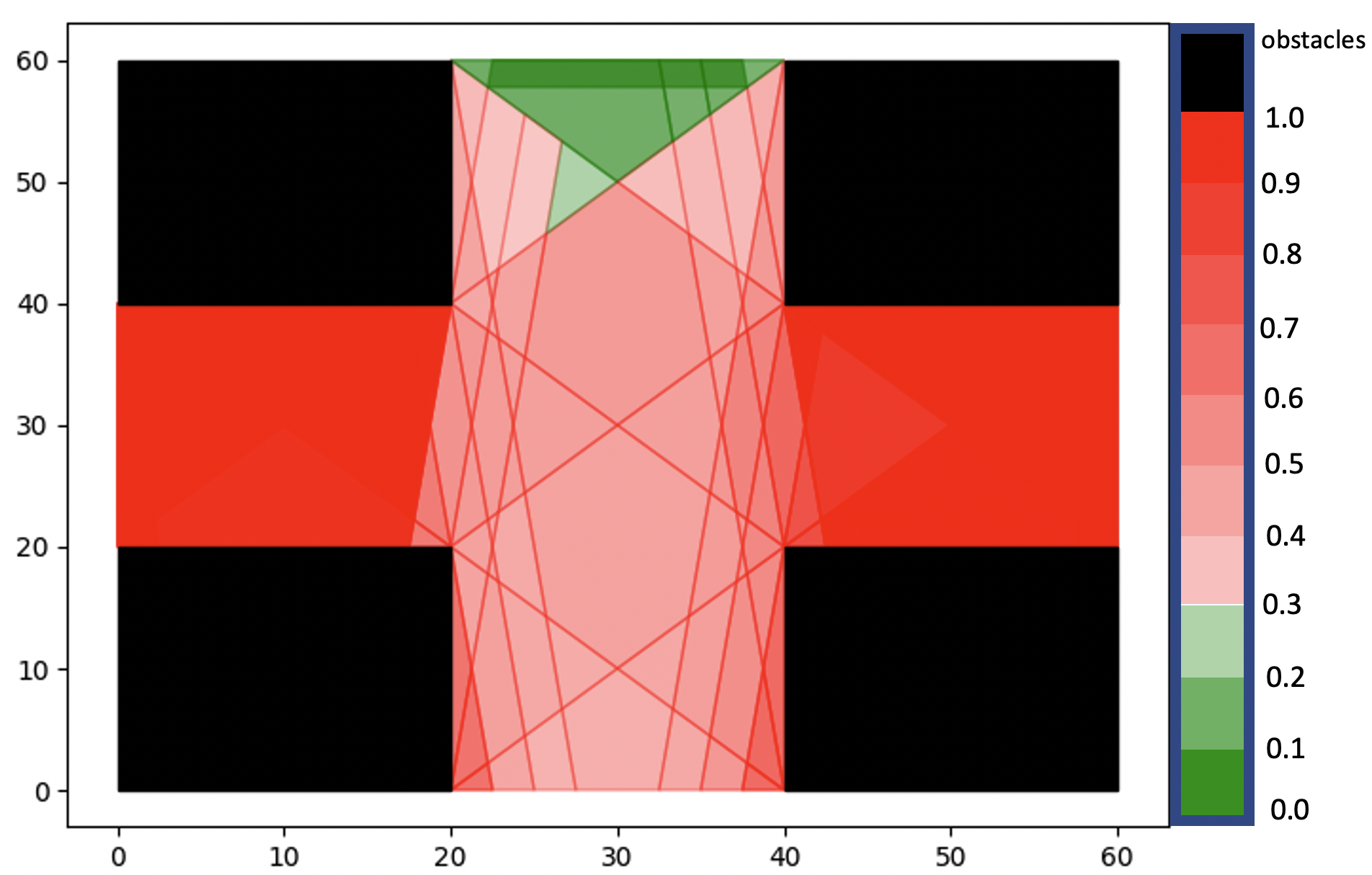}
    \end{minipage}
  }

  \centering
  \subfigure[]{
    \begin{minipage}[h]{0.3\linewidth}
      \centering
      \includegraphics[width = 5cm]{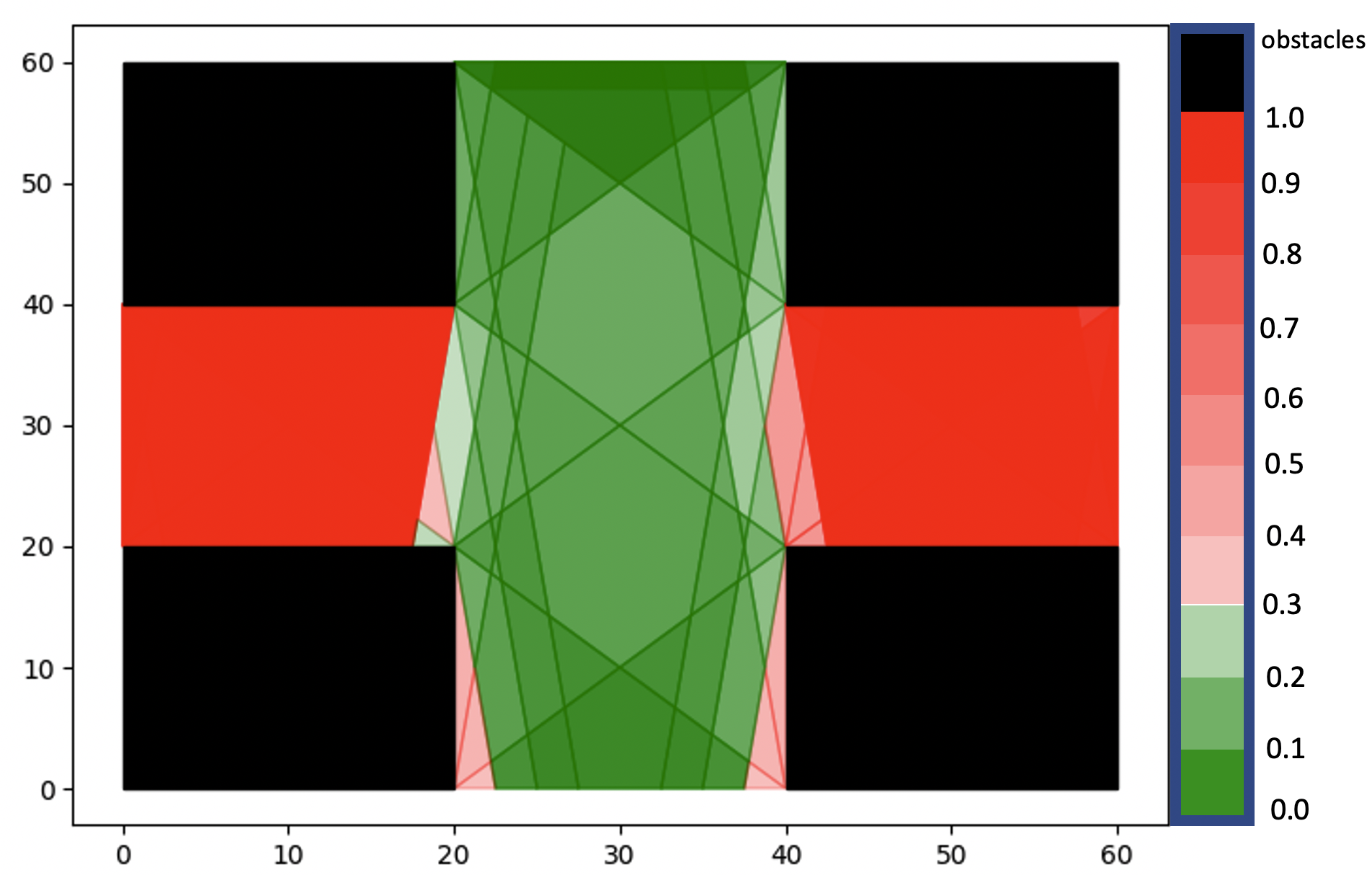}
    \end{minipage}
  }
  \subfigure[]{
    \begin{minipage}[h]{0.3\linewidth}
      \centering
      \includegraphics[width = 5cm]{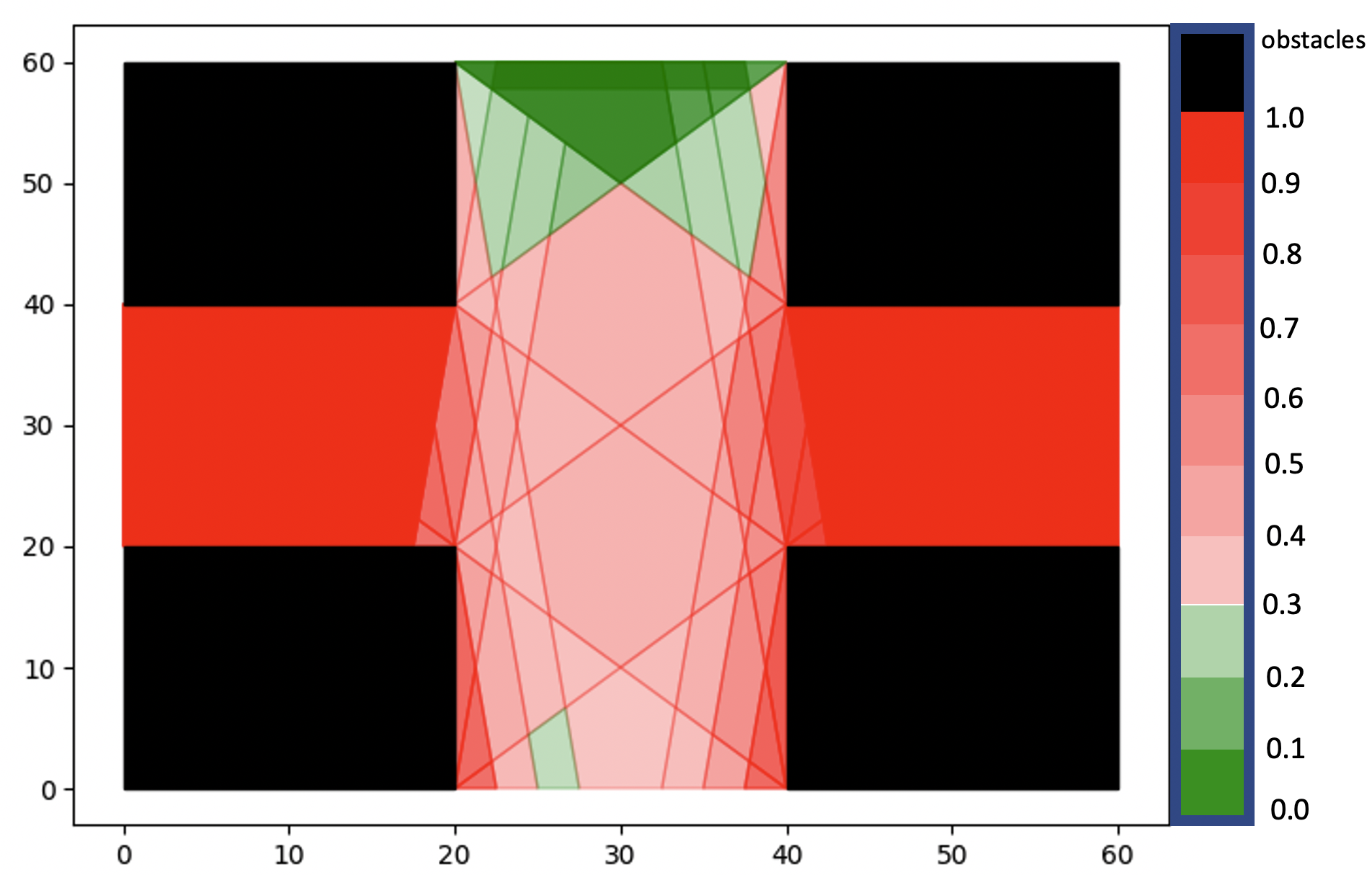}
    \end{minipage}
  }
  \subfigure[]{
    \begin{minipage}[h]{0.3\linewidth}
      \centering
      \includegraphics[width = 5cm]{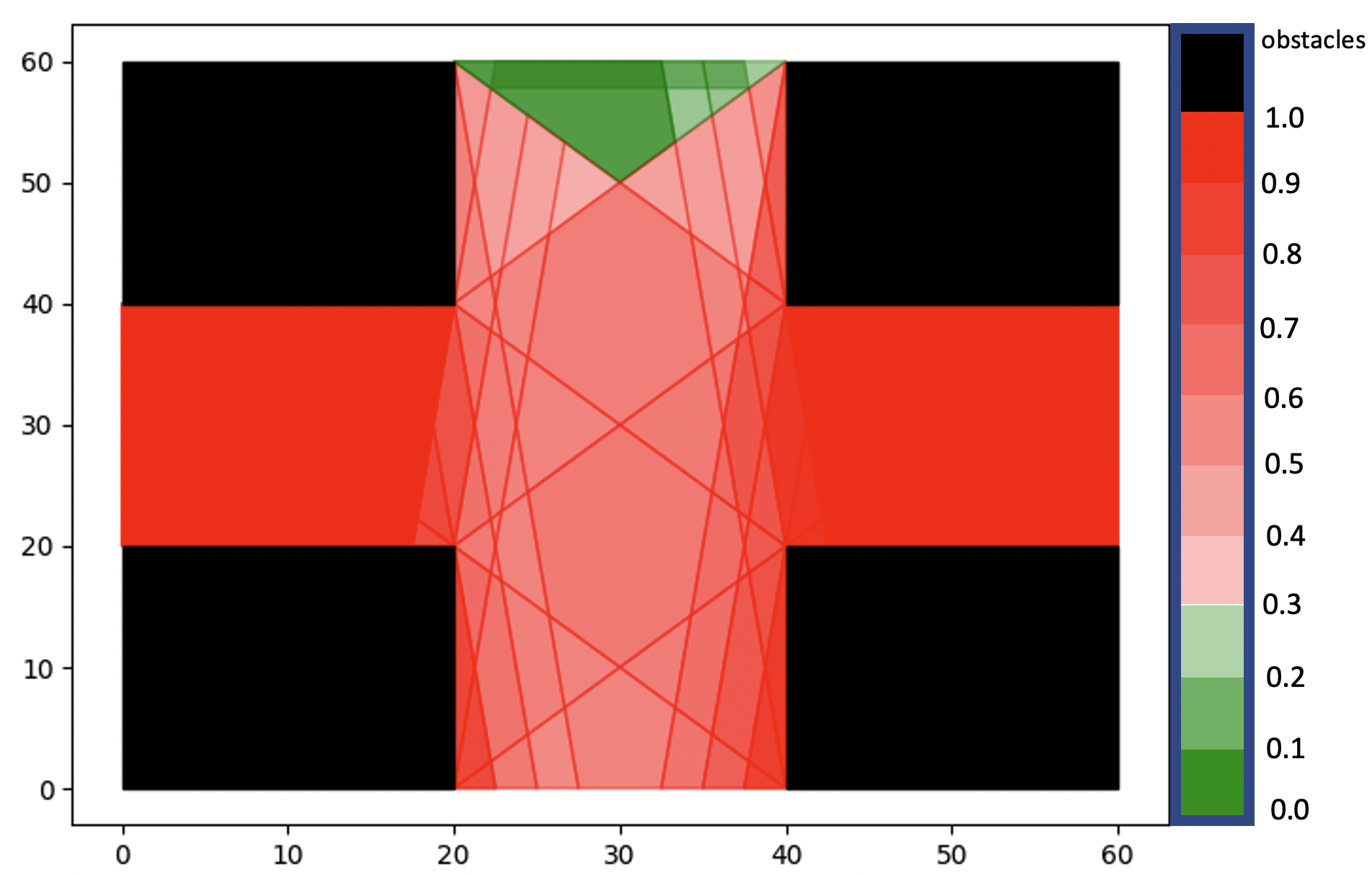}
    \end{minipage}
  }
  \caption{
    Safety probability bound \( \hat{P}_{T} \) for various horizons
    computed using
    \autoref{alg/safety/verif} ((a), (b) and (c))
    and
    the algorithm proposed in \cite{b20} ((d), (e), (f)).
    Left column corresponds to \( T = 3\),
    middle column to \( T = 6 \) and
    right column to \( T = 9 \).
  }
  \label{fig/cmpr}
\end{figure*}

\begin{figure}[t]
  \centering
  \subfigure[Without Merging and Refinement]{
    \begin{minipage}[h]{0.45\linewidth}
      \centering
      \includegraphics[width = 4cm]{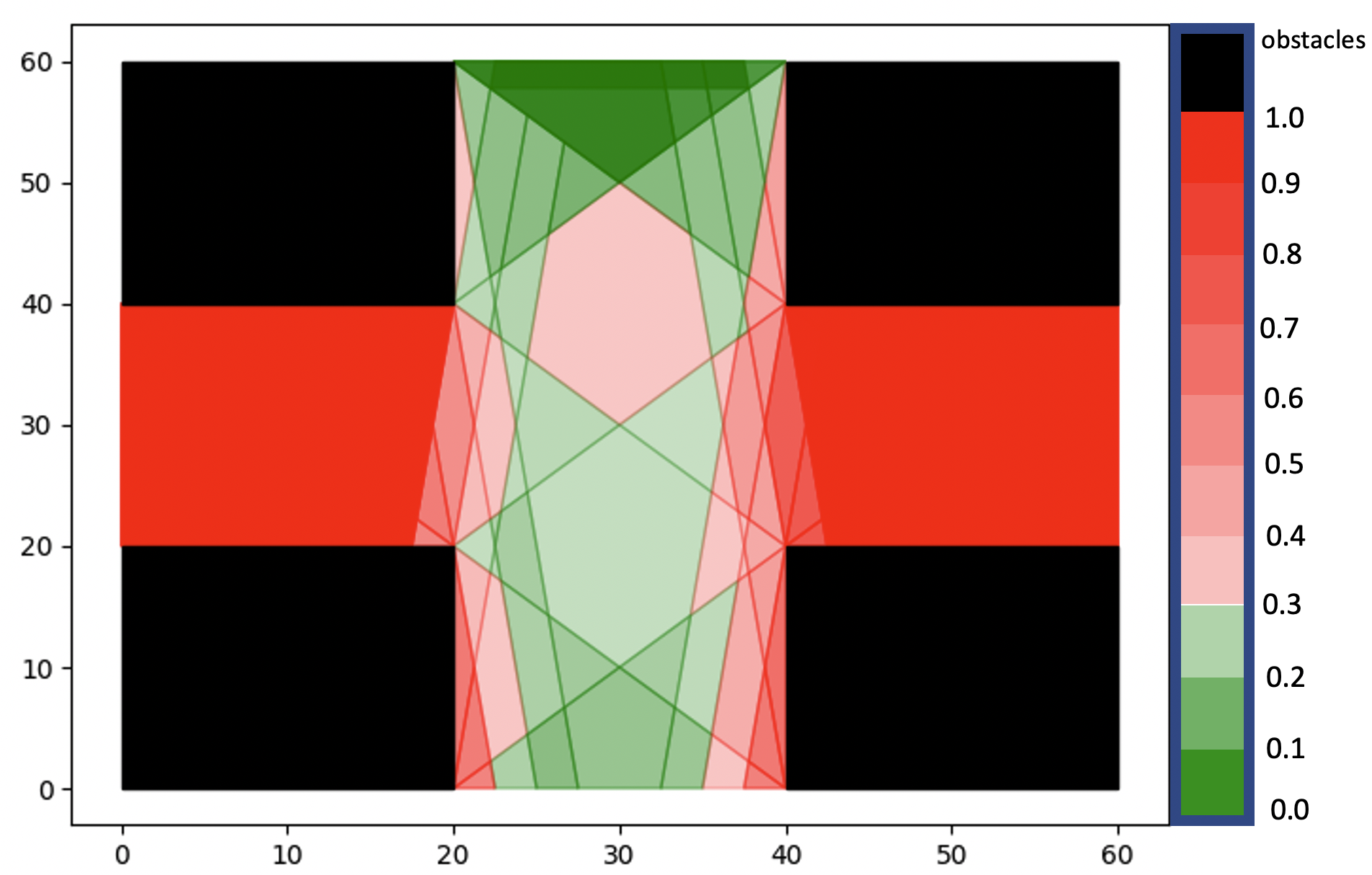}
    \end{minipage}
  }
  \subfigure[With Merging]{
    \begin{minipage}[h]{0.45\linewidth}
      \centering
      \includegraphics[width = 4cm]{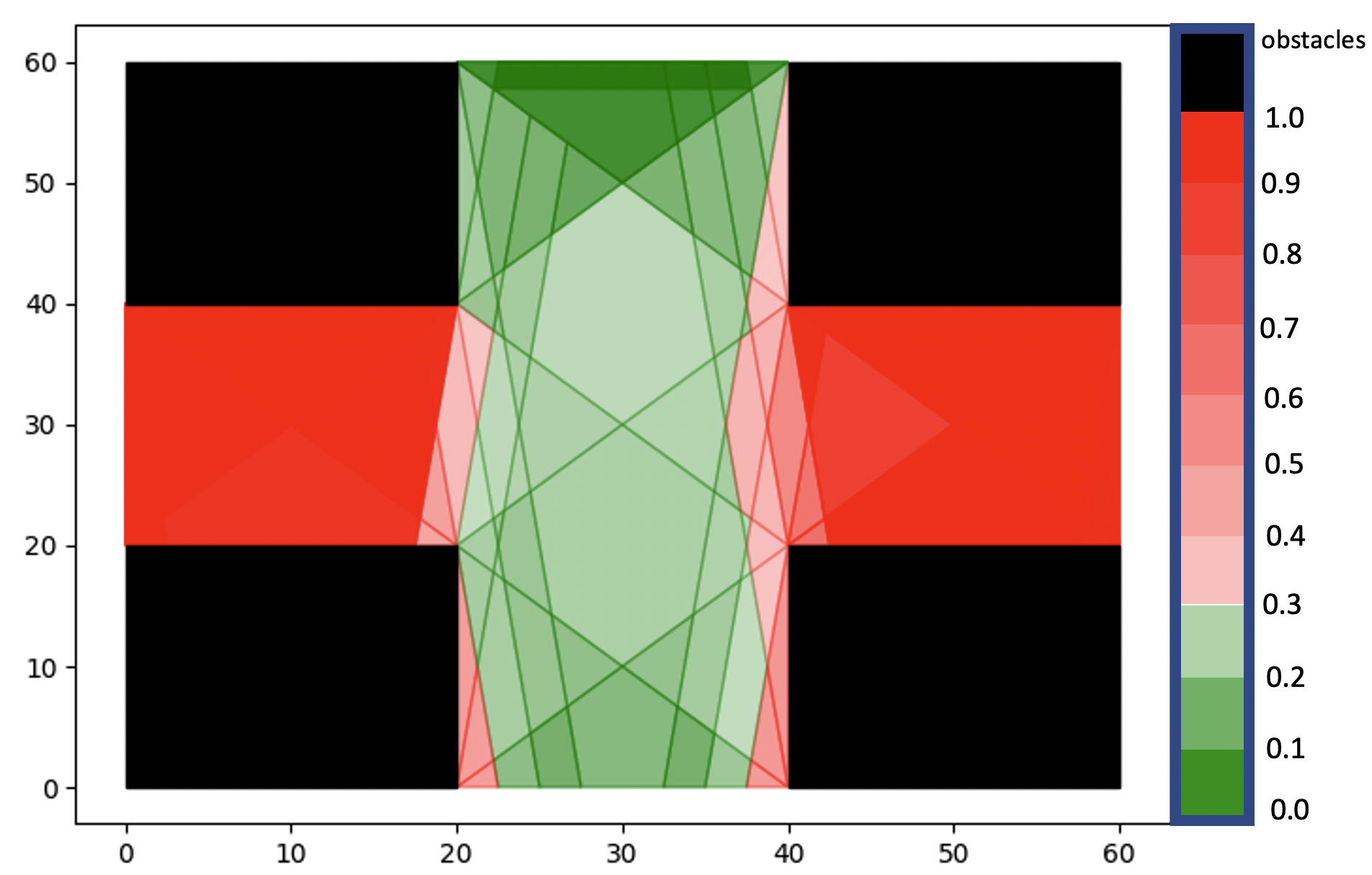}
    \end{minipage}
  }
  \centering
  \subfigure[With Refinement]{
    \begin{minipage}[h]{0.45\linewidth}
      \centering
      \includegraphics[width = 4cm]{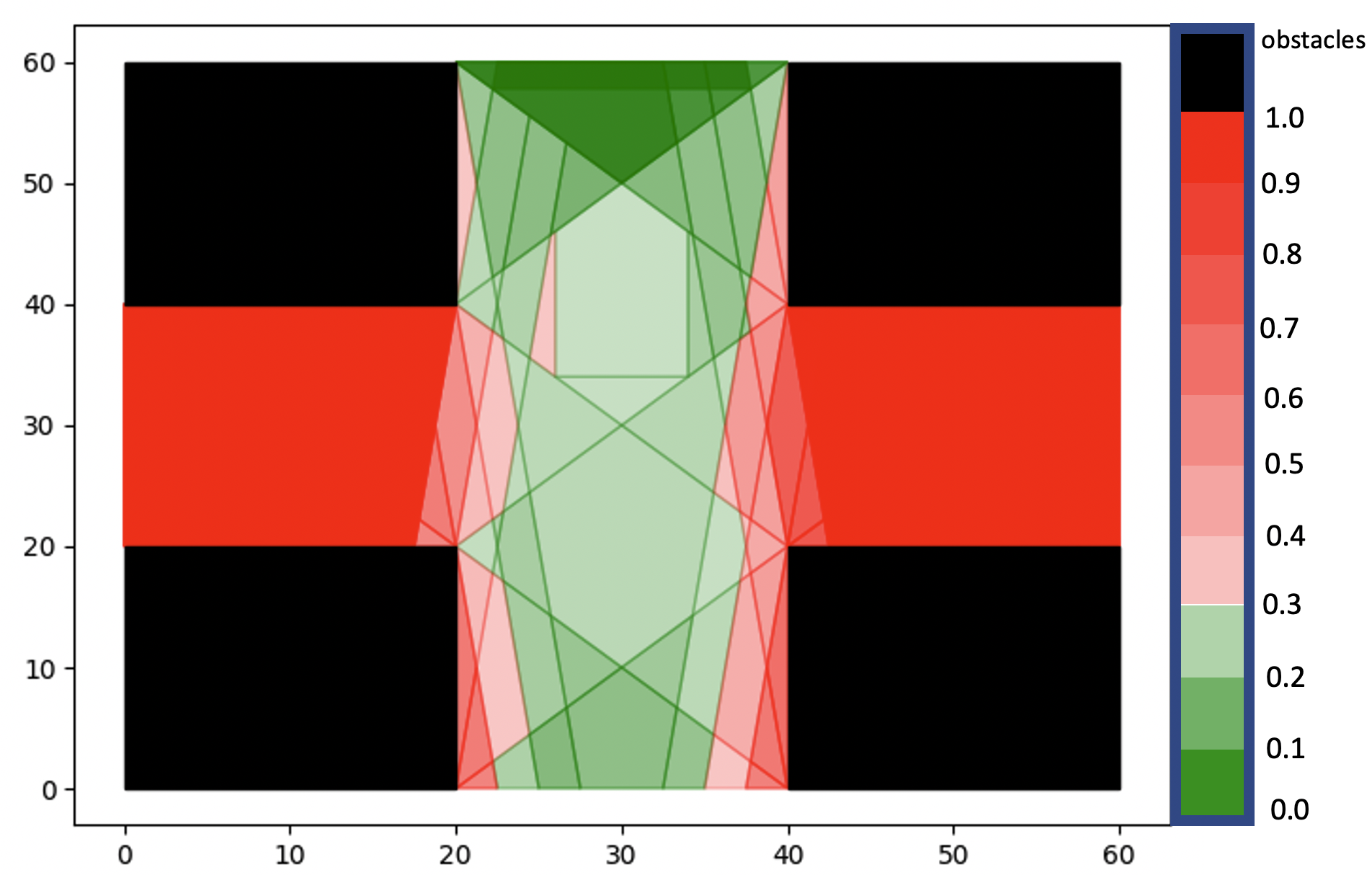}
    \end{minipage}
  }
  \subfigure[With Merging and Refinement]{
    \begin{minipage}[h]{0.45\linewidth}
      \centering
      \includegraphics[width = 4cm]{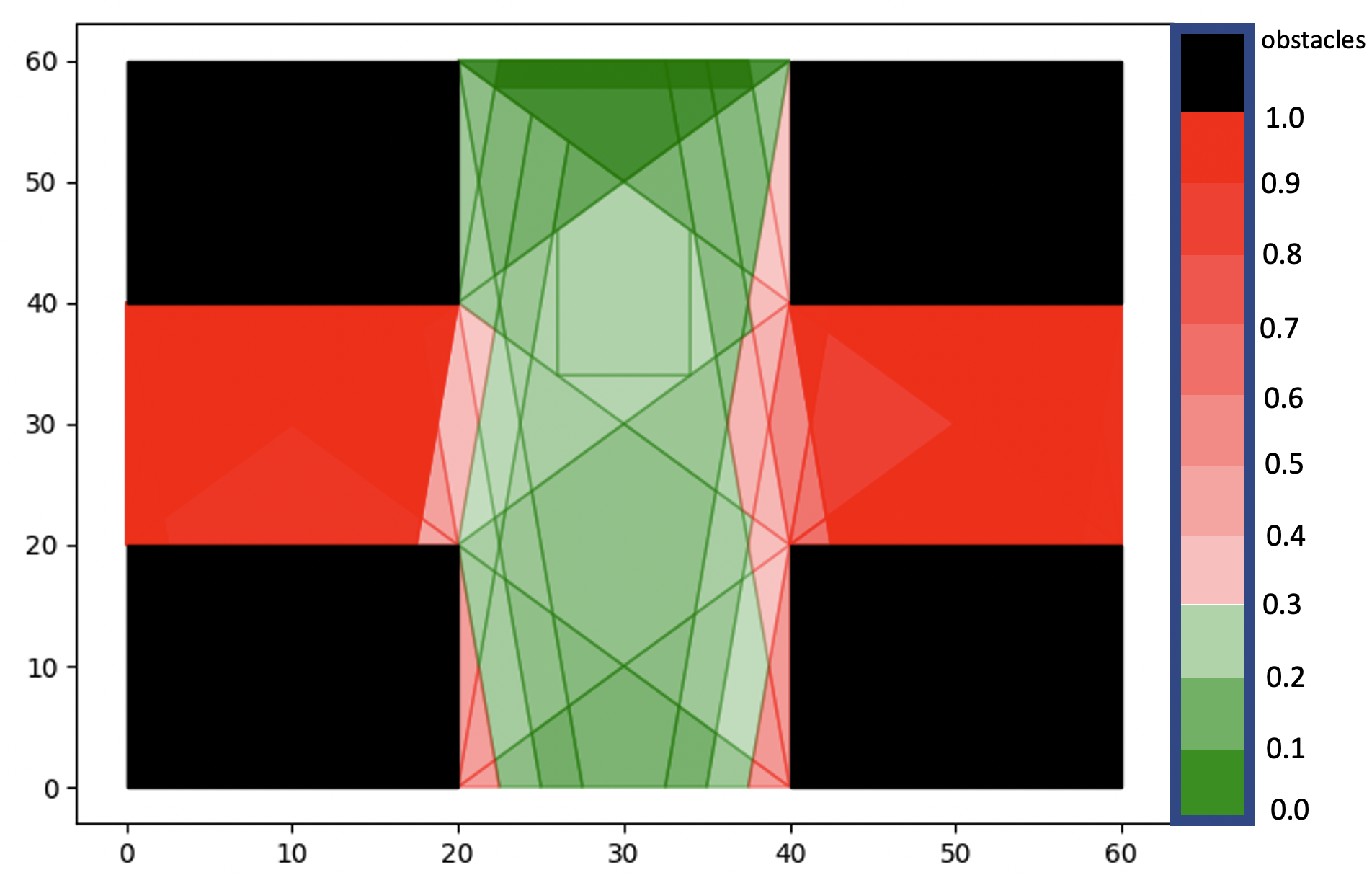}
    \end{minipage}
  }
  \caption{
    Safety probability bound \( \hat{P}_{6} \)
    computed using
    \eqref{eq/spb/naive} (left column)
    and
    \autoref{alg/safety/verif}
    for horizon $T=7$ on the original and refined graphs, respectively.
  }
  \label{fig/var/part}
  \vspace{-1cm}
\end{figure}
\begin{figure}[h]
  \centering
  \begin{minipage}[t]{1\linewidth}
    \centering
    \includegraphics[width=8cm]{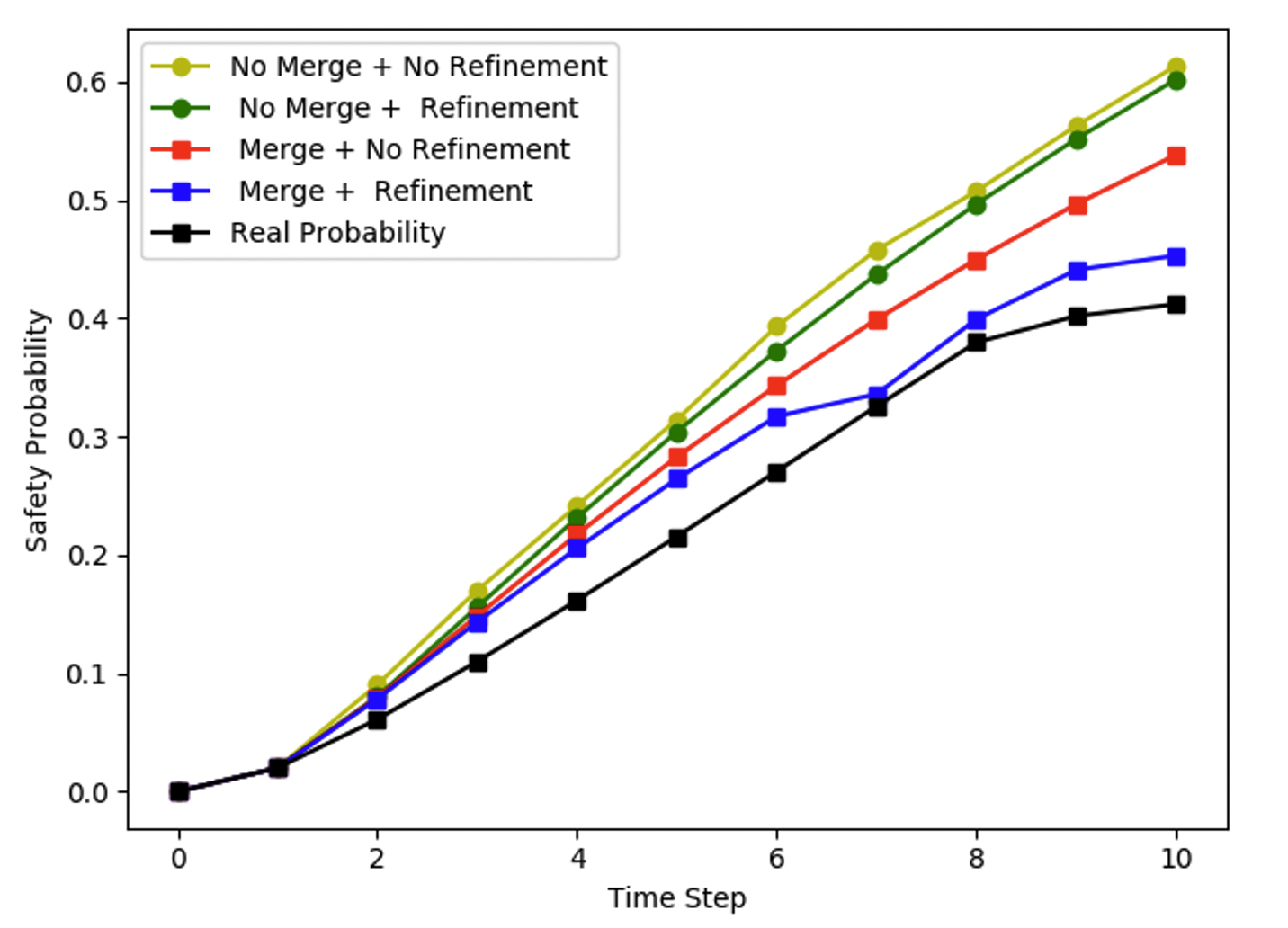}
  \end{minipage}%
  \caption{
    Bound on and actual safety probability \( P_{k} \)
    of a given cell \( \mathcal{S}_i \)
    for different horizons.
  }
  \label{fig/var/cell}
\end{figure}
In~\autoref{fig/cmpr}
we provide a comparison between
the bound \( \hat{P}_{T} \) computed over the given domain
using~\autoref{alg/safety/verif}
and
the method proposed in~\cite{b20},
for various horizons \( T \).
We observe that
our method provides tighter bounds compared to~\cite{b20} for every horizon,
even for the coarse partition considered in this scenario.
Additionally,
in~\autoref{fig/var/part},
we show the bounds on the safety probability \( P_{6} \)
computed over this domain using
\eqref{eq/spb/naive} and \autoref{alg/safety/verif} on
the original and refined transition graphs,
respectively.
%
The refined graph was obtained using
the heuristic subdivision scheme presented in~\autoref{sec/refine} to
the largest cell located near the center of the workspace.
It can be seen in \autoref{fig/var/part} that
the safety probability bounds estimated using
merging and/or refinement are noticeably tighter than the ones without.
Finally,
we certify the correctness of the proposed safety probability bounds by
comparing them to the true safety probability of
for the cell that is adjacent to the one subject to the refinement.
To estimate the true safety probability,
we simulate a sufficiently large number (\(\approx 10000\) of
robot trajectories starting from states within that cell
and
compute the percentage of those that end up violating the safety requirements
as a result of the applied disturbances.
In~\autoref{fig/var/cell}
we present the safety probability bounds returned by our method for
that given cell for different horizons and
compare these bounds to the estimated true safety probability.
We observe that all bounds returned by our method correctly upper bound
the true probability,
while the bounds obtained by using both merging and refinement
are the tightest.
We also remark that
the gap between the estimated bounds and the true safety probability
becomes larger as the horizon increases.



\section{Conclusions}%
\label{sec/conclusions}

In this work,
we addressed the problem of formal safety verification of
stochastic cyber-physical systems (CPS) equipped with
a ReLU neural network (NN) controllers.
Particularly,
we presented a method to compute sets of initial states which
that ensure that the system trajectories are safe 
within a specified horizon.
To do this,
we designed a suitable discrete abstraction of the system
and
formulated an SMC problem to estimate upper bounds on
the transition probabilities between cells in this discrete abstraction.
Additionally,
we proposed a method to obtain tighter bounds on
the corresponding safety probability
as well as a heuristic for refining the abstraction in a way that
may further improve the results.
Finally,
we presented simulation results verifying the efficacy of our method
compared to existing methodologies proposed in the literature.



\bibliographystyle{IEEEtran}
\bibliography{references.bib}

\end{document}